\newtheorem{prop}{Proposition}
\newtheorem{cor}{Corollary}
\newtheorem{claim}{Claim}
\newtheorem{lemma}{Lemma}
\newtheorem{theorem}{Theorem}
\newtheorem{definition}{Definition}
\author{%
Jakkepalli Pavan Kumar \\
Department of Computer Science and Engineering\\
National Institute of Technology\\
Warangal, Telangana, India}
\author{
	Jakkepalli Pavan Kumar, P. Venkata Subba Reddy \\
	Department of Computer Science and Engineering\\
	National Institute of Technology\\
	Warangal, Telangana, India}
\title{Algorithmic Aspects of Secure Connected Domination in Graphs}
\date{}
\begin{document}
\maketitle
\begin{abstract}
	Let $G = (V,E)$ be a simple, undirected and connected graph. 
	A connected dominating set $S \subseteq V$ is a secure connected dominating set of $G$, if for each $ u \in V\setminus S$, there exists $v \in S$ such that $(u,v) \in E$ and the set $(S \setminus \lbrace v \rbrace) \cup \lbrace u \rbrace $ is a connected dominating set of $G$. The minimum size of a secure connected dominating set of $G$ denoted by $ \gamma_{sc} (G)$, is called the secure connected domination number of $G$. Given a graph $ G$ and a positive integer $ k,$ the Secure Connected Domination (SCDM) problem is to check whether $ G $ has a secure connected dominating set of size at most $ k.$ In this paper, we prove that the SCDM problem is NP-complete for doubly chordal graphs, a subclass of chordal graphs. We investigate the complexity of this problem for some subclasses of bipartite graphs namely, star convex bipartite, comb convex bipartite, chordal bipartite and chain graphs. The Minimum Secure Connected Dominating Set (MSCDS) problem is to find a secure connected dominating set of minimum size in the input graph. We propose a $ (\Delta(G)+1) $ - approximation algorithm for MSCDS, where $ \Delta(G) $ is the maximum degree of the input graph $ G $ and prove that MSCDS cannot be approximated within $ (1 − \epsilon) \ln(\vert V \vert ) $ for any $ \epsilon > 0 $ unless $ NP \subseteq DTIME(\vert V \vert^{ O(\log \log \vert V \vert )}) $ even for bipartite graphs. Finally, we show that the MSCDS is APX-complete for graphs with $\Delta(G)=4.$
\end{abstract}

\section{Introduction}
Throughout this paper, all graphs $G=(V,E)$ should be finite, simple (i.e., without self-loops and multiple edges), undirected and connected with vertex set $V$ and edge set $E$. The (\textit{open}) \textit{neighborhood} of a vertex $v \in V$ is the set $N(v) = \{u \in V$ $\vert$ $(u, v) \in E\}$. If $X \subseteq V$, then the open neighborhood of $X$ is the set $N(X) = \cup_{v \in X} N(v)$. The closed neighborhood of $X$ is $N[X] = X \cup N(X)$.  The \textit{degree} of a vertex $v$ is $\vert N(v)\vert$ and is denoted by $d(v)$. If $d(v) = 1$, then $v$ is called a \textit{pendant vertex} of $G$ and the support vertex of a pendant vertex $ v $ is the unique vertex $ s $ such that $ (v,s) \in E $. The maximum degree of vertices in $ V $ is denoted by $ \Delta(G) $. For a graph $G,$ and a set $S \subseteq V,$ the subgraph of $G$ \textit{induced} by $S$ is defined as $G[S]=(S,E_S)$, where $E_S=\{(x,y)\in E$ $:$ $x,y \in S\}$. If $G[S]$, where $S\subseteq V$, is a complete subgraph of $G$, then it is called a \textit{clique} of $G$. A set $S \subseteq V$ is an \textit{independent set} if $G[S]$ has no edge. A \textit{split graph} is a graph in which the vertices can be partitioned into a clique and an independent set. For undefined terminology and notations we refer to \cite{west}.\par
A set $S \subseteq V$ is a \textit{dominating set} (DS) in $G$ if for every $u \in V \setminus S$, there exists $v \in S$ such that $(u,v) \in E$, i.e., $N[S] = V$. The minimum size of dominating set in $G$ is called the \textit{domination number} of $G$ and is denoted by $\gamma(G)$.	A set $S \subseteq V$ is a \textit{connected dominating set} (CDS) of $G$ if $G[S]$ is connected and every vertex not in $S$ is adjacent to a vertex in $S$. The minimum size of CDS in $G$ is called the \textit{connected domination number} of $G$ and is denoted by $\gamma_c(G)$. The study of domination and related problems is one of the fastest growing areas in graph theory.
The study of literature on various domination parameters in graphs has been surveyed and outlined in \cite{Haynes,67}. An important domination parameter called secure domination has been introduced by E.J. Cockayne et al. in \cite{pog}. A set $S \subseteq V$ is a \textit{secure dominating set} (SDS) of $G$ if, for each vertex $u \in V \setminus S$, there exists a neighboring vertex $v$ of $u$ in $S$ such that $(S \setminus \{v\}) \cup \{u\}$ is a dominating set of $G$ (in which case $v$ is said to \textit{defend} $u$). The decision version of secure domination problem is known to be NP-complete for general graphs \cite{dev} and remains NP-complete even for various restricted families of graphs such as bipartite, doubly chordal and split graphs \cite{osd,sdsnp}. Recently, H. Wang et al. \cite{sdsnp} obtained some approximation results related to secure domination. A CDS $S$ of $G$ is called a \textit{secure connected dominating set} (SCDS) in $G$ if, for each $ u \in V \setminus S$, there exists $ v \in S$ such that  $(u,v) \in E$ and $(S \setminus \lbrace v \rbrace ) \cup \lbrace u \rbrace $ is a CDS in $G$ (in which case $v$ is said to be $ S $-\textit{defender}). The \textit{secure connected domination number} of graph $G$ is the minimum size of a SCDS, and is denoted by $\gamma_{sc}(G)$  \cite{sc1}. Given a graph $ G$ and a positive integer $ k,$ the Secure Connected Domination (SCDM) problem is to check whether $ G $ has a SCDS of size at most $ k.$ It is known that SCDM is NP-complete for bipartite graphs and split graphs, whereas it is linear time solvable for block graphs and threshold graphs \cite{akce}. The Minimum Secure Connected Dominating Set (MSCDS) problem is to find a SCDS of minimum size in the input graph.\par
\paragraph{Preliminaries} In a graph $G$, a vertex $v$ is \textit{simplicial} if its closed neighborhood $N_G[v]$ induces a complete subgraph of $G$. An ordering $\{v_1, v_2, \ldots, v_n\}$ of the vertices of $V$ is a \textit{perfect elimination ordering} (PEO), if $v_i$ is a simplicial of the induced subgraph $G_i = G[\{v_i, v_{i+1}, \ldots, v_n\}]$ for every $i$, $1 \le i \le n$. A graph $G$ is \textit{chordal} if and only if $G$ admits a PEO. A vertex $u \in N[v]$ is a \textit{maximum neighbor} of $v$ in $G$ if $N[w] \subseteq N[u]$ holds for each $w \in N[v]$. A vertex $v \in V$ is called \textit{doubly simplicial} if it is a simplicial vertex and has a maximum neighbor. An ordering $\{v_1, v_2, \ldots, v_n\}$ of the vertices of $V$ is a \textit{doubly perfect elimination ordering} (DPEO) of $G$ if $v_i$ is a doubly simplicial vertex of the induced subgraph $G_i = G[\{v_i, v_{i+1}, \ldots, v_n\}]$ for every $i$, $1 \le i \le n$. A graph $G$ is \textit{doubly chordal} if and only if it has a DPEO \cite{doubly}. Alternatively, doubly chordal graphs are chordal and dually chordal graphs. An undirected graph is a \textit{tree} if it is connected and cycle-free. A \textit{star} is a tree $T=(A, F)$, where $A=\{a_0, a_1, \ldots, a_n\}$ and $F=\{(a_0, a_i) \vert 1 \le i \le n\}$. A \textit{comb} is a tree $T=(A,F)$, where $A=\{a_1, a_2, \ldots, a_{2n}\}$ and $F=\{(a_i,a_{i+1})$ $\vert$ $1 \le i \le n-1\}$ $\cup$ $\{(a_i,a_{n+i})$ $\vert$ $1 \le i \le n\}$. A bipartite graph $G = (A, B, E)$ is called \textit{tree convex bipartite graph} if there is an associated tree $T=(A, F)$ such that for each vertex $b$ in $B$, its neighborhood $N_G(b)$ induces a subtree of $T$ \cite{jiang}. If $T$ is a star (or comb), then $G$ is called as \textit{star convex bipartite} (or \textit{comb convex bipartite}) graph. 
A graph $ G $ is \textit{chordal bipartite} if $ G $ is bipartite and each cycle of $ G $ of length greater than $ 4 $ has a chord. Alternatively, chordal bipartite graphs are weakly chordal and bipartite graphs.
\section{Complexity Results}
In this section, we show that the complexity of SCDM in doubly chordal, star convex bipartite, comb convex bipartite, and chordal bipartite graphs is NP-complete. Also, we prove that SCDM is linear time solvable in chain graphs, a subclass of bipartite graphs. The decision version of domination and secure connected domination problems are defined as follows.\\[4pt]
\noindent \textbf{Domination Decision Problem (DOMINATION)} \\ 
\textit{Instance:} A simple, undirected graph $G$ and a positive integer $k$.\\
\textit{Question:} Does there exist a dominating set of size at most $ k $ in $ G $?\\[9pt]
\textbf{Secure Connected Domination Problem (SCDM)} \\
\textit{Instance}: A simple, undirected and connected graph $G$ and a positive integer $l$.\\
\textit{Question}: Does there exist a SCDS of size at most $ l $ in $ G $? \\[3pt]
\noindent Domination decision problem for bipartite graphs has been proved as NP-complete \cite{bersto}. 
Let $ P(G) $ and
$ S(G) $ be the set of pendant and support vertices of $ G $, respectively.
\begin{prop}(\cite{sc1})\label{p1}
	Let $ G $ be a connected graph of order $ n \ge 3 $ and let $ X $ be a secure connected dominating set of $ G $. Then\\
	(i) $ P(G) \subseteq X $ and $ S(G) \subseteq X$,\\
	(ii) no vertex in $ P(G) \cup S(G) $ is an X-\textit{defender}.
\end{prop}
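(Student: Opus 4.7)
The plan is to prove both parts by direct contradiction, exploiting the fact that a pendant vertex has a single neighbor (its support), which imposes rigid structural constraints on any SCDS $X$.

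For part (i), I would first establish $P(G)\subseteq X$. Suppose, toward contradiction, that some pendant $u\notin X$ has unique neighbor $s$. Since $X$ dominates $u$, we must have $s\in X$, and since $s$ is the only neighbor of $u$ in $G$, the $S$-defender of $u$ can only be $s$. Then the swapped set $(X\setminus\{s\})\cup\{u\}$ must be a CDS; but $u$'s only neighbor $s$ is absent from this set, so $u$ is isolated in its induced subgraph, contradicting connectedness. The degenerate case $|X|=1$ (i.e.\ $X=\{s\}$) is ruled out by $n\ge 3$ together with the fact that $\{u\}$ cannot dominate $V$ when $|N[u]|=2<n$. Once $P(G)\subseteq X$ is in hand, the inclusion $S(G)\subseteq X$ follows quickly: given a support vertex $s$ with pendant neighbor $u$, we have $u\in X$ by what was just proved; if $s\notin X$, then $u$'s only neighbor $s$ lies outside $X$, so $u$ is isolated in $G[X]$, contradicting the connectedness of $X$ (with the $|X|=1$ case again excluded by $n\ge 3$).

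For part (ii), I would split on the type of the alleged defender $v\in P(G)\cup S(G)$. If $v$ is a pendant, its unique neighbor is its support, which lies in $X$ by (i); hence $v$ has no neighbor outside $X$ and therefore cannot defend any $u\in V\setminus X$. If $v$ is a support vertex, let $u$ be a pendant neighbor of $v$; by (i), $u\in X$, and clearly $u\neq v$. For any candidate $w\in V\setminus X$ that $v$ might defend, the swap $(X\setminus\{v\})\cup\{w\}$ contains $u$ but not $u$'s unique neighbor $v$, so $u$ becomes isolated in the induced subgraph, contradicting the requirement that the swapped set be connected.

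The whole argument is short, and I expect no serious obstacle; the only point requiring care is the degenerate case $|X|=1$ in part (i), which must be handled separately using $n\ge 3$ and the observation that the closed neighborhood of a pendant vertex has exactly two elements.
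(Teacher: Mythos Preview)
Your argument is correct. Note, however, that this proposition is quoted in the paper from \cite{sc1} and is not proved there; it is used as a known result. Consequently there is no ``paper's own proof'' to compare against. Your direct contradiction arguments for both parts are standard and sound: the key observations---that a pendant's only possible defender is its support, that removing the support isolates the pendant in the induced subgraph, and that the degenerate case $|X|=1$ is excluded by $n\ge 3$ together with $|N[u]|=2$ for a pendant $u$---are exactly what one needs, and you have handled them carefully.
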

\subsection{Secure connected domination for doubly chordal graphs}\label{complexity}
\noindent To prove the NP-completeness of the SCDM for doubly chordal graphs we consider the following SET-COVER decision problem which has been proved as NP-complete \cite{karp}. \smallskip \\[3pt]
\textbf{Set Cover Decision Problem (SET-COVER)} \\ [6pt]
\textit{Instance:} A finite set $X$ of elements, a family of $m$ subsets of elements $C$ and a\\
 \hspace*{1.7cm}positive integer $k$.\\
\textit{Question:} Does there exist a subfamily of $k$ subsets $C^\prime$ whose union equals $X$? 
\begin{theorem}
	SCDM is NP-complete for doubly chordal graphs.
\end{theorem}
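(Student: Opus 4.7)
The plan is to prove the theorem in two parts: show that SCDM is in NP, then reduce the NP-complete SET-COVER problem to SCDM on doubly chordal graphs. Membership in NP is routine: given a candidate $S$, one checks in polynomial time that $G[S]$ is connected, that $S$ dominates $V$, and, for each $u\in V\setminus S$, whether some neighbor $v\in S$ makes $(S\setminus\{v\})\cup\{u\}$ a CDS.

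For NP-hardness, given a SET-COVER instance $(X,\mathcal{C},k)$ with $X=\{x_1,\ldots,x_n\}$ and $\mathcal{C}=\{C_1,\ldots,C_m\}$ (assume, without loss of generality, that every element lies in some $C_i$, since otherwise SET-COVER is vacuously NO), I would build $G$ as follows. Introduce a vertex $c_i$ for each set $C_i$, and make $\{c_1,\ldots,c_m\}$ a clique. Introduce a vertex $x_j$ for each element with $c_i\sim x_j$ iff $x_j\in C_i$. Add a hub vertex $u$ adjacent to every $c_i$ and every $x_j$, and attach a pendant $u'$ to $u$. Set $l=k+2$. The first substep is to certify that $G$ is doubly chordal by exhibiting the DPEO $u',x_1,x_2,\ldots,x_n,u,c_1,\ldots,c_m$: $u'$ is a pendant whose unique neighbor $u$ serves as its maximum neighbor; after removing $u'$, each $x_j$ has closed neighborhood $\{x_j,u\}\cup\{c_i:x_j\in C_i\}$, which is a clique thanks to the $c$-clique and $u$'s universality in the surviving subgraph, and $u$ remains a maximum neighbor; after peeling the $x_j$'s, $u$ becomes simplicial inside the clique $\{u,c_1,\ldots,c_m\}$; finally the $c_i$'s can be eliminated in any order.

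For equivalence, the forward direction is clean: given a cover $I$ of size $k$, take $S=\{u,u'\}\cup\{c_i:i\in I\}$. It is immediately a CDS since $u$ dominates $V\setminus\{u'\}$ and $G[S]$ is a star rooted at $u$. The secure condition is equally effortless thanks to the hub: any $c_{i'}\notin S$ is defended by any $c_i\in I$, and any $x_j\notin S$ is defended by any $c_i\in I$ with $x_j\in C_i$ (which exists since $I$ covers $X$), because in each case the post-swap set is still a star rooted at $u$. The reverse direction is where Proposition~\ref{p1} does the real work: it forces $\{u,u'\}\subseteq S$, so $T:=S\setminus\{u,u'\}$ satisfies $|T|\le k$. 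Writing $T_c=T\cap\{c_1,\ldots,c_m\}$ and $T_x=T\cap\{x_1,\ldots,x_n\}$, I would note that $u$ cannot defend (it is a support vertex, cf.\ Proposition~\ref{p1}(ii)), so for every $x_j\notin T_x$ the defender must be some $c_i\in T_c$ with $x_j\in C_i$, i.e., $T_c$ covers $X\setminus T_x$. Supplementing $T_c$ with one set per element of $T_x$ gives a cover of $X$ of size at most $|T_c|+|T_x|=|T|\le k$.

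The main obstacle I anticipate is engineering the graph so that the secure condition is both (a) automatic in the forward direction and (b) tight enough to force a cover in the reverse direction. My resolution is the hub $u$: making $u$ universal over $\{c_i\}\cup\{x_j\}$ guarantees that any single-vertex swap leaves a star, which handles both the connectivity and the domination demanded by the secure condition; while the pendant $u'$ invokes Proposition~\ref{p1}(ii) to disqualify $u$ as a defender, preventing the hub from trivializing the reduction and leaving the real work to the $c_i$'s. Verifying the DPEO is a routine but essential check that distinguishes the construction from a generic chordal reduction.
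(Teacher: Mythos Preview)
Your proof is correct and follows essentially the same construction as the paper, with your $u,u'$ playing the roles of the paper's $p,q$ and your DPEO being a harmless reordering of theirs. The forward and reverse arguments also match; if anything, your reverse direction is more explicit than the paper's in invoking Proposition~\ref{p1}(ii) to rule out the hub as a defender of the element vertices.
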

\begin{proof}
	Clearly, SCDM is in NP. If a set $S \subseteq V$, such that $|S| \le l$ is given as a witness to a yes instance then it can be verified in polynomial time that $S$ is a SCDS of $G$.\par
	Let $X=\{x_1$, $x_2,\ldots,x_n\}$, $C=\{C_1,C_2,\ldots,C_m\}$ be an instance of SET-COVER problem. We now construct an instance of SCDM from the given instance of SET-COVER as similar to the reduction in \cite{sdsnp} as follows. Construct a graph $G$ with the following vertices: (i) a vertex $x_i$ for each element $x_i \in X$, (ii) vertex $c_j$ for each subset $C_j \in C$ and let $C^*=\{c_j : 1 \le j \le m\}$ and (iii) two vertices $p$ and $q$.  Add the following edges in $G$: (i) if $x_i \in C_j$, then add edge $(x_i,c_j)$, where $1\le i \le n$ and $1 \le j \le m$, (ii) edges between every pair of vertices in the set $C^* \cup \{p\}$, (iii) edges between $x_i$ and $p$, where $1\le i \le n$ and (iv) edge between $p$ and $q$. Since $ G $ admits a DPEO $\{x_1, x_2, \ldots, x_n, c_1, c_2, \ldots, c_m, p,q\}$, it is a doubly chordal graph and the construction of $G$ can be accomplished in polynomial time.\par
	Now we show that the given instance of SET-COVER problem $<X,C>$ has a set cover of size at most $k$ if and only if the constructed graph $G$ has a SCDS of size at most $l=k+2$. Suppose $C^\prime \subseteq C$ is a set cover of $X$, with $\vert C^\prime \vert \le k$, then it is easy to verify that the set $$S= \{c_j : C_j \in C^\prime\} \cup \{p,q\}$$ is a SCDS of size at most $k+2$ in $G$.\par 
	Conversely, suppose $S \subseteq V$ is a SCDS of size at most $l=k+2$ in $G$. From Proposition \ref{p1}, it is clear that $\vert S \cap \{p,q\} \vert = 2$. Let $X^* = S \cap X$ and $S^*= S \cap \{c_j : 1\le j \le m\}$. If $\vert X^* \vert=0$, then we are done, that is respective subsets of vertices of $S^*$ form the solution for SET-COVER and clearly $\vert S^* \vert \le k$. Otherwise, since $X$ is an independent set, every vertex in $X^*$ can be replaced with its adjacent vertex in the set $C^*$ and size of the resultant set is at most $ k $. Therefore, there exists a set cover of size at most $k$. 
\end{proof}
\subsection{Secure connected domination for subclasses of bipartite graphs}
\begin{theorem}
	SCDM is NP-complete for star convex bipartite graphs.
\end{theorem}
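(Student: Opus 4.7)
The proof has two parts. Membership in NP is immediate: given a candidate $S \subseteq V$ with $|S| \le l$, one can verify in polynomial time that $S$ dominates $V$, induces a connected subgraph, and satisfies the defender condition for every $u \in V \setminus S$. For NP-hardness, my plan is to reduce from SET-COVER, tailoring the construction of the previous theorem so that the output is bipartite and star-convex.

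Given $(X, C, k)$ with $X = \{x_1, \ldots, x_n\}$ and $C = \{C_1, \ldots, C_m\}$, I would build $G' = (A \cup B, E')$ by placing the element-vertices $x_1, \ldots, x_n$ together with a designated hub $a_0$ (and auxiliary vertices from a pendant-based gadget) on the $A$-side, and the subset-vertices $c_1, \ldots, c_m$ (together with the gadget's pendant endpoints) on the $B$-side. Edges include $(x_i, c_j)$ whenever $x_i \in C_j$, together with $(a_0, c_j)$ for every $j$, plus the gadget edges. The associated tree $T$ is the star on $A$ centered at $a_0$ with $x_1, \ldots, x_n$ and the auxiliary $A$-vertices as leaves. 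Then each $c_j$ has neighborhood $\{a_0\} \cup \{x_i : x_i \in C_j\}$ in $G'$, which is the center together with some leaves of $T$ and hence a subtree of $T$, while the gadget vertices in $B$ are pendants with single-leaf neighborhoods that are trivially subtree-inducing.

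With the threshold set to $l = k + c$, where $c$ counts the vertices forced into every SCDS by Proposition \ref{p1}, I would argue both directions. In the forward direction, from a set cover $C' \subseteq C$ of size at most $k$, the set $\{c_j : C_j \in C'\}$ together with the forced gadget vertices is a SCDS: $a_0$ witnesses connectivity, the set-cover property witnesses domination of $X$, and each $x_i \notin S$ is defended by some $c_j \in S$ with $x_i \in C_j$. For the converse, given any SCDS of size at most $l$, Proposition \ref{p1} pins the gadget part, and an exchange argument analogous to the one in the doubly-chordal theorem lets us replace every residual $x_i \in S$ by an adjacent $c_j$ (since $X$ is independent in $G'$) without increasing the size, yielding a set cover of size at most $k$.

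The main technical obstacle will be designing the pendant gadget so that the secure-defense condition is not destroyed for the $c_j \notin S$. A single pendant attached to $a_0$ pins $a_0 \in S$ via Proposition \ref{p1} but also makes any defender swap that removes $a_0$ infeasible, since the pendant would become isolated; for a typical $c_j \notin S$ the only $S$-neighbor in $A$ is then $a_0$, so no defender exists. I expect to resolve this by duplicating the hub — introducing a second $A$-vertex $a_0'$ adjacent to every $c_j$ and carrying its own pendant — so that defender exchanges route through the parallel copy. The subtree condition is preserved because $a_0'$ can be placed as another leaf of $T$ attached to $a_0$, and every $c_j$'s enlarged neighborhood still consists of the center together with leaves.
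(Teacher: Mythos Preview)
Your proposal has two genuine gaps that the duplicated-hub idea does not repair.

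First, the forward-direction claim that ``each $x_i \notin S$ is defended by some $c_j \in S$ with $x_i \in C_j$'' fails. If $C_j$ happens to be the unique set in $C'$ containing some other element $x_{i'}$, then $(S \setminus \{c_j\}) \cup \{x_i\}$ no longer dominates $x_{i'}$: the element-vertices have no $B$-side neighbours apart from the $c_j$'s, and $x_i$ (being on the $A$-side) is not adjacent to $x_{i'}$. In the doubly-chordal reduction this was harmless because the vertex $p$ was universal and dominated every $x_i$ regardless of which $c_j$ was swapped out; in your bipartite version there is no such backstop on the $B$-side.

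Second, for $c_j \notin S$ you still have no defender even after duplicating the hub. Both $a_0$ and $a_0'$ carry pendants, so by Proposition~\ref{p1}(ii) neither may act as an $S$-defender; and in the forward-direction set $S$ no $x_i$ lies in $S$. Hence every $S$-neighbour of $c_j$ is ineligible. Duplication would help only if the second copy were \emph{not} a support vertex, but then nothing forces it into every SCDS and the size accounting in the converse collapses.

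The paper avoids both difficulties by reducing from \textsc{Domination} on bipartite graphs rather than from \textsc{Set-Cover}. It adds one universal vertex on \emph{each} side of the bipartition (each with its own pendant), so that after any single swap the two universals still guarantee domination and connectivity of the whole graph; the defenders for vertices outside $S$ are then supplied by the dominating set $D$ itself, not by the hubs. This two-sided universality is precisely what a one-sided hub construction cannot provide.
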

\begin{proof}
	It is known that SCDM is in NP. We reduce the Domination problem for bipartite graphs to SCDM for star convex bipartite graphs as follows. The reduction is similar to the construction given in \cite{sdsnp}. Given an instance $G = (A, B, E)$ of Domination problem for bipartite graphs, where 
	$A = \{a_1, a_2, \ldots, a_p\}$ and $B =\{b_1, b_2, \ldots, b_q\}$, we construct an instance $G^\prime = (A^\prime, B^\prime, E^\prime)$ of SCDM, where $A^\prime = A$ $\cup$ $\{a_x, a_y\}$, $B^\prime = B$ $\cup$ $\{b_x, b_y\}$, and $E^\prime = E \cup \{(a_x,b_i) : 1 \le i \le q\} \cup \{(b_x,a_i) : 1 \le i \le p\} \cup \{(a_x,b_x),(a_x,b_y),(b_x,a_y)\}.$ It can be verified that $G^\prime$ is a star convex bipartite graph with its associated star $T=(A^\prime, F)$, where $F = \{(a_x,a_i)$ $\vert$ $1\le i \le p\}$ $\cup$ $\{(a_x,a_y)\}$. Note that the construction of graph $G^\prime$ can be done in polynomial time. \par
	Next we show that $G$ has a dominating set of size at most $k$ if and only if $G^\prime$ has a SCDS of size at most $l=k+4$. Suppose $D$ is a dominating set in $G$ of size at most $k$. Then it can be easily verified that the set $D$ $\cup$ $\{a_x, a_y, b_x, b_y\}$ is a SCDS in $G^\prime$ of size at most $k+4$. \par
	On the other hand, let $S$ be a SCDS in $G^\prime$ with $\vert S \vert \le l=k+4$. From Proposition \ref{p1}, it is clear that $\vert S \cap \{a_x,a_y,b_x,b_y\}\vert=4$. Let $S^*=S \setminus \{a_x, a_y, b_x, b_y\}$. Thus, $\vert S^* \vert \le k$. Since $ S $ is a SCDS and $ a_x$ and  $b_x $ are support vertices, for every vertex $a_i \in A$, for $1 \le i \le p$, $\vert S^* \cap N_{G^\prime}[a_i]\vert \ge 1$. Similarly, for every vertex $b_i \in B$, for $1 \le i \le q$, $\vert S^* \cap N_{G^\prime}[b_i]\vert \ge 1$. Therefore, $S^*$ is a dominating set of size at most $k$ in $G$.  
\end{proof}

\begin{theorem}
	SCDM is NP-complete for comb convex bipartite graphs.
\end{theorem}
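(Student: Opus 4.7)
The plan is to reduce bipartite Domination, which is NP-complete \cite{bersto}, to SCDM on comb convex bipartite graphs, in the same spirit as the preceding star convex reduction. SCDM is in NP since a candidate $S$ can be verified in polynomial time: check $|S|\le l$, that $G[S]$ is connected, that $N[S]=V$, and that for each $u\in V\setminus S$ some neighbor $v\in S$ defends $u$.

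Given an instance $G=(A,B,E)$ of bipartite Domination with $A=\{a_1,\dots,a_p\}$ and $B=\{b_1,\dots,b_q\}$, I would construct $G'=(A',B',E')$ by padding $A$ with a spine of $p+2$ new vertices $s_1,\dots,s_{p+2}$ together with two auxiliary teeth $a_y,a_z$, so that $|A'|=2(p+2)$ is even, and by padding $B$ with a vertex $b_x$, a vertex $b_y$, and one vertex $b_i^{\ast}$ for each $s_i$. The associated comb $T$ on $A'$ has spine $s_1\text{-}s_2\text{-}\cdots\text{-}s_{p+2}$ with $a_i$ attached to $s_i$ for $1\le i\le p$, $a_y$ attached to $s_{p+1}$, and $a_z$ attached to $s_{p+2}$. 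The edges of $G'$ are the original edges $E$; the edges $(s_i,b)$ for every $s_i$ and every $b\in B$, so that $N_{G'}(b)$ contains the entire spine; the edges $(b_x,a)$ for every $a\in A'$; the lone edges $(a_y,b_y)$ and $(a_z,b_x)$; and the pendant edges $(s_i,b_i^{\ast})$. I would verify comb convexity by checking that $N_{G'}(b')$ induces a subtree of $T$ for every $b'\in B'$: the whole spine plus some teeth for original $b$'s, all of $T$ for $b_x$, and a single vertex for $b_y$ and each $b_i^{\ast}$.

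Letting $F=\{a_y,a_z,b_x,b_y\}\cup\{s_i,b_i^{\ast}:1\le i\le p+2\}$, Proposition \ref{p1}(i) forces $F\subseteq S$ in every SCDS $S$ of $G'$, because $a_y, b_x$ and each $s_i$ are supports of the pendants $b_y, a_z$ and $b_i^{\ast}$ respectively. I would then prove that $G$ has a dominating set of size at most $k$ iff $G'$ has an SCDS of size at most $k+|F|$. In the forward direction, a dominating set $D$ of $G$ yields the SCDS $D\cup F$: it dominates $G'$ (each $a_i$ via $b_x$ or $D$, each $b_j$ via $s_1$ or $D$), is connected through $b_x$, and admits defenders given by the $A$- or $B$-side neighbors guaranteed by $D$'s dominating property. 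In the converse direction, Proposition \ref{p1}(ii) forbids the supports $b_x$ and $s_i$ from being defenders; hence for each $a_i\in A\setminus S$ the defender must lie in $N_G(a_i)\cap S\subseteq B$, and for each $b_j\in B\setminus S$ the defender must lie in $N_G(b_j)\cap S\subseteq A$. Since $F\cap(A\cup B)=\emptyset$, the projection $S\setminus F\subseteq A\cup B$ is a dominating set of $G$ of size at most $k$.

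The main obstacle is enforcing three conditions simultaneously: (i)~comb convexity, handled by making every spine vertex adjacent to every original $b$ and padding the teeth with $a_y, a_z$ so that $|A'|$ has the parity required by a comb; (ii)~the pendant/support structure on both sides, so that $b_x$ and every $s_i$ are genuine supports and thus excluded from being defenders by Proposition \ref{p1}(ii); and (iii)~the translation from the SCDS defender condition to a dominating-set condition on $S\setminus F$, which hinges on the fact that removing $b_x$ isolates $a_z$ in the induced subgraph and removing $s_i$ isolates $b_i^{\ast}$, so no defender for an original $a_i$ or $b_j$ can lie outside $A\cup B$. Verifying these points in a mutually compatible construction, and in particular that no shortcut through the spine breaks the defender argument, is the delicate part.
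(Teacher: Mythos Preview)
Your proposal is correct and follows essentially the same route as the paper. Both proofs reduce from bipartite Domination by (i) adding a ``spine'' of new $A$-side vertices, each made adjacent to every original $b\in B$ and each given its own $B$-side pendant, so the spine serves as the backbone of the associated comb and its vertices become support vertices; (ii) adding a hub $b_x\in B'$ adjacent to all of $A'$ together with a pendant so that $b_x$ is also a support; and (iii) invoking Proposition~\ref{p1} to force all these pendants and supports into any SCDS, after which the residual $S\cap(A\cup B)$ is a dominating set of $G$ by the defender argument. The only differences are cosmetic: the paper uses a backbone of size $p+1$ (the vertices $a'_{p+1},\dots,a'_{2p},a_x$) with $a_x,a_y$ both pendants on $b_x$, giving $l=k+2p+3$, whereas you use a backbone of size $p+2$ and make $a_y$ a support of an extra pendant $b_y$, giving $l=k+2p+8$; neither choice affects the argument.
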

\begin{proof}
	It is known that SCDM is in NP. To prove the NP-hardness of SCDM for comb convex bipartite graphs we reduce from Domination problem for bipartite graphs. Given an instance $G = (A, B, E)$ of Domination problem for bipartite graphs, where 
	$A = \{a_1, a_2, \ldots, a_p\}$ and $B =\{b_1, b_2, \ldots, b_q\}$, we construct an instance $G^\prime = (A^\prime, B^\prime, E^\prime)$ of SCDM, where $A^\prime = A$ $\cup$ $\{a_{p+1}^\prime, a_{p+2}^\prime, \ldots,$ $a_{2p}^\prime\}$ $\cup \{a_x, a_y\} $, $B^\prime = B$ $\cup$ $\{b_{p+1}^\prime, b_{p+2}^\prime, \ldots, b_{2p}^\prime\}$ $\cup$ $\{b_x\}$, and $E^\prime = E \cup \{(a_i^\prime, b_j) : p+1 \le i \le 2p, 1 \le j \le q\} \cup \{(a_i^\prime, b_i^\prime) : p+1 \le i \le 2p\} \cup \{(a_i, b_x) : 1 \le i \le p\} \cup \{(a_{p+i}^\prime, b_x) : 1 \le i \le p\} \cup \{(a_x,b_x),(a_y,b_x)\}$. It can be verified that $G^\prime$ is a comb convex bipartite graph with its associated comb $T=(A^\prime, F)$, with backbone $\{a_{p+1}^\prime,a_{p+2}^\prime, \ldots, a_{2p}^\prime, a_x\}$ and teeth $\{a_1,a_2,\ldots,a_p,a_y\}$. It can be noted that the construction of graph $G^\prime$ can be done in polynomial time. \par
	Next we show that $G$ has a dominating set of size at most $k$ if and only if $G^\prime$ has a SCDS of size at most $l=k+2p+3$. Suppose $D$ is a dominating set in $G$ of size at most $k$. Then it can be easily verified that the set $D \cup \{a_{p+i}^\prime, b_{p+i}^\prime : 1 \le i \le p\} \cup \{a_x, a_y, b_x\}$ is a SCDS in $G^\prime$ of size at most $k+2p+3$. \par
	Conversely, let $S$ be a SCDS of size at most $k+2p+3$ in $ G^\prime$. Let $A^*=\{a_{p+i}^\prime$ $:$ $1 \le i \le p\}$ and $B^*=\{b_{p+i}^\prime : 1 \le i \le p\}$. From Proposition \ref{p1}, it is clear that $\vert S \cap A^*\vert=p$, $\vert S \cap B^*\vert=p$ and $\vert S \cap \{a_x,a_y,b_x\} \vert =3$. Suppose $S^*=S \cap V$, then $\vert S^* \vert \le k$. Since $S$ is a SCDS of $G^\prime$, it can be easily verified that for every vertex $v \in A \cup B$, $N[v] \cap S^* \ne \emptyset$. Therefore, $S^*$ is a dominating set of size at most $k$.
\end{proof}
\noindent The following Vertex-Cover problem has been proved as NP-complete \cite{karp}, which will be used to show SCDM for chordal bipartite graphs as NP-complete.\\
A \textit{vertex cover} of an undirected graph $ G = (V, E) $ is a subset of vertices $ V^\prime \subseteq V$ such that if edge $ (u, v) \in E $, then either $ u \in V^\prime $ or $ v \in V^\prime$ or both.\\[5pt]
\noindent \textbf{Vertex Cover Decision Problem (Vertex-Cover)} \\ [6pt]
\textit{Instance:} A simple, undirected graph $G$ and a positive integer $k$.\\
\textit{Question:} Does there exist a vertex cover of size at most $ k $ in $ G$?

\begin{theorem}
	SCDM is NP-complete for chordal bipartite graphs.
\end{theorem}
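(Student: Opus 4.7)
The plan is to reduce from Vertex-Cover. Given an instance $(G=(V,E),k)$ with $V=\{v_1,\dots,v_n\}$ and $E=\{e_1,\dots,e_m\}$, I would construct a bipartite graph $G'=(A',B',E')$ by embedding $V$ into $A'$, representing every edge of $G$ by a small gadget on the $B'$-side, and attaching a constant-sized packet of pendants and supports so that, by Proposition~\ref{p1}, a fixed set $F$ of vertices lies in every SCDS of $G'$. Taking $l=k+|F|$, the existence of an SCDS of $G'$ of size at most $l$ will then be equivalent to the existence of a set of at most $k$ vertices in $V$ that meets every edge gadget, i.e., a vertex cover of $G$.

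For the forward direction, starting from a vertex cover $C\subseteq V$ with $|C|\le k$, the set $S=C\cup F$ is to be verified as a connected dominating set whose secure-defence condition for every $u\notin S$ is witnessed by the hubs and supports inside $F$. For the converse, Proposition~\ref{p1} pins down the contribution $F\subseteq S$; because each gadget's $B'$-side is independent, any vertex of $S\setminus F$ lying inside a $B'$-gadget can be swapped for one of its $A'$-neighbours, which lies in $V$, without enlarging $S$, producing a subset of $V$ of size at most $k$ that covers every edge of $G$.

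The main obstacle will be proving that $G'$ is chordal bipartite. The naive incidence construction (a single $e_j\in B'$ per edge of $G$, joined to its two endpoints) already fails whenever $G$ contains a triangle, since such a triangle induces a chordless $6$-cycle through the three incidence vertices. I would therefore thicken each edge gadget with auxiliary chord vertices --- for example, giving $e_j$ a twin $e_j'\in B'$ sharing a common $A'$-neighbour, so that every potentially long cycle through $e_j$ picks up a chord via $e_j'$ --- and route the pendant/support hubs so that cycles involving them are chorded as well. Verifying the chordal bipartite property then reduces to a finite case analysis on induced cycles through a bounded number of gadgets and hubs, and this combinatorial check, rather than the SCDS/vertex-cover correspondence itself, is where I expect the real work of the proof to lie.
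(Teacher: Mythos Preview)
Your high-level plan --- reduce from Vertex-Cover, force a fixed block $F$ of pendants and supports into every SCDS via Proposition~\ref{p1}, and read off a vertex cover from $S\setminus F$ --- matches the paper's overall strategy. The paper also reduces from Vertex-Cover and also relies on pendant/support gadgets together with a connectivity argument to force the cover.

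However, your proposed mechanism for ensuring chordal bipartiteness has a genuine gap. Local thickening of edge gadgets by twins cannot work: if $e_j'\in B'$ merely shares neighbours with $e_j$, then $e_j'$ lies off every induced cycle that already avoids it and contributes no chord to that cycle. Concretely, a triangle $v_1v_2v_3$ in $G$ produces the induced $6$-cycle $v_1\,g_{12}\,v_2\,g_{23}\,v_3\,g_{13}$ in any incidence-style construction; a chord would have to join two non-adjacent cycle vertices, but $A'$--$A'$ and $B'$--$B'$ edges are forbidden by bipartiteness, and a chord such as $g_{12}v_3$ would require the gadget for edge $\{v_1,v_2\}$ to be adjacent to $v_3$. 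No purely local per-edge gadget can supply this, so as soon as $G$ contains a cycle your $G'$ will fail to be chordal bipartite.

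The paper resolves this with a \emph{global} device, adapting the M\"uller--Brandst\"adt construction: it builds a component $G_i$ for each vertex (containing distinguished vertices $x_i,y_i,z_i$ and several pendants) and components $G_{ij},G_{ji}$ for each edge, and then makes $X\cup Z\cup\{t\}$ and $Y\cup\{u\}$ span a complete bipartite subgraph. Every cycle of length greater than $4$ is forced through some $y_i$, and the universal adjacencies then guarantee a chord; the case analysis is short precisely because of this global backbone. The vertex-cover correspondence is also extracted differently from your sketch: it is not a swap argument on $B'$-vertices, but a connectivity argument --- each $G_{ij}$ contains pendant/support pairs whose only paths to the rest of $S'$ pass through $x_i$ or $y_j$, forcing one endpoint of every edge into $S'$. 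If you want to salvage your outline, the missing ingredient is exactly such a universal-adjacency layer; local twin gadgets will not suffice.
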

\begin{proof}
	It is known that SCDM is in NP. To prove NP-hardness of SCDM for chordal bipartite graphs we reduce from Vertex-Cover. The reduction is similar to the construction given in \cite{miller}. Given an instance $ G=(V,E) $ of Vertex-Cover, where $\vert V \vert=n$ and $ \vert E \vert =m $, we construct an instance $ G^\prime =(V^\prime, E^\prime)$ of SCDM as follows.
	\begin{enumerate}
		\item Replace each vertex $ i \in V $ by a component $ G_i=(V_i, E_i): $\\
		\begin{figure}[H]
			\begin{center}
				\begin{tikzpicture}[scale=0.8]
				\node at (-1,0){\textbullet};\node at (-1,-0.3){$ a_i $};
				\node at (0,0){\textbullet};\node at (0,-0.3){$ b_i $};
				\node at (1,0){\textbullet};\node at (1,-0.3){$ z_i $};
				\node at (2,0){\textbullet};\node at (2,-0.3){$ d_i $};
				\node at (3,0){\textbullet};\node at (3,-0.3){$ f_i $};
				
				\node at (0,1){\textbullet};\node at (0,1.3){$ x_i $};
				\node at (1,1){\textbullet};\node at (1,1.3){$ y_i $};
				\node at (2,1){\textbullet};\node at (2,1.3){$ c_i $};
				\node at (3,1){\textbullet};\node at (3,1.3){$ e_i $};
				
				\draw (-1,0)--(0,0)--(1,0)--(2,0)--(3,0);
				\draw (0,1)--(1,1)--(2,1)--(3,1);
				\draw (0,0)--(0,1);
				\draw (1,0)--(1,1);
				\draw (2,0)--(2,1);
				\end{tikzpicture}
			\end{center}
		\end{figure}
		%
		%
		%
		%
		\item Replace each edge $(i,j) \in E $ by the following components $ G_{ij}=(V_{ij}, E_{ij}) $ (Figure (a)) and $ G_{ji}=(V_{ji}, E_{ji}) $ (Figure (b))
		\begin{figure}[H]
			\begin{center}
				\begin{tikzpicture}[scale=0.8]
				\node at (0,0){\textbullet};\node at (0,-0.3){$ y_j $};
				\node at (1,0){\textbullet};\node at (1,-0.3){$ r_{ij} $};
				\node at (2,0){\textbullet};\node at (2,-0.3){$ s_{ij} $};
				
				\node at (0,1){\textbullet};\node at (0,1.3){$ x_i $};
				\node at (1,1){\textbullet};\node at (1,1.3){$ p_{ij} $};
				\node at (2,1){\textbullet};\node at (2,1.3){$ q_{ij} $};
				
				\draw (0,0)--(1,0)--(2,0);
				\draw (0,1)--(1,1)--(2,1);
				\draw (0,0)--(0,1);
				\draw (1,0)--(1,1);
				
				\node at (0.9,-0.9) {(a)};
				
				
				
				\node at (5,0){\textbullet};\node at (5,-0.3){$ y_i $};
				\node at (6,0){\textbullet};\node at (6,-0.3){$ r_{ji} $};
				\node at (7,0){\textbullet};\node at (7,-0.3){$ s_{ji} $};
				
				\node at (5,1){\textbullet};\node at (5,1.3){$ x_j $};
				\node at (6,1){\textbullet};\node at (6,1.3){$ p_{ji} $};
				\node at (7,1){\textbullet};\node at (7,1.3){$ q_{ji} $};
				
				\draw (5,0)--(6,0)--(7,0);
				\draw (5,1)--(6,1)--(7,1);
				\draw (5,0)--(5,1);
				\draw (6,0)--(6,1);
				
				\node at (6,-0.9) {(b)};
				\end{tikzpicture}
			\end{center}
		\end{figure}
		Let $ X=\{x_i: i=1,\ldots,n\},  Y=\{y_i: i=1,\ldots,n\},$
		$ Z=\{z_i: i=1,\ldots,n\}$,  $K=X \cup Y \cup Z,$
		$ A=\{a_i,b_i,c_i,d_i,e_i,f_i: i=1,\ldots,n\}$, and $B=\{p_{ij},q_{ij},p_{ji},q_{ji},r_{ij},s_{ij},r_{ji},s_{ji}: (i,j)\in E\}$.
		\item Add two more additional vertices $ t $ and $ u$ such that $V^\prime=K \cup A \cup B \cup \{t, u\}$,\\
		$ E^\prime = \bigcup\limits_{i=1}^{n}E_i$ $\cup \bigcup\limits_{(i,j)\in E} (E_{ij} \cup E_{ji}) \cup \{(x_i,y_j),(z_i,y_j) : i=1,\ldots,n$ $\&$ $j=1,\ldots,n\} \cup  \{(x_i,u), (z_i,u)$, $(y_i,t) : i = 1, \ldots, n\} \cup \{(t,u)\}$.
	\end{enumerate} 
	Since $ V^\prime $ can be partitioned into two independent sets $ X \cup Z \cup \{a_i,c_i,f_i : i=1,\ldots,n\} \cup \{q_{ij},q_{ji},r_{ij},r_{ji} : (i,j)\in E\} \cup \{t\}$ and  $ Y \cup \{b_i,d_i,e_i : i=1,\ldots,n\} \cup \{p_{ij},p_{ji},s_{ij},s_{ji} : (i,j)\in E\} \cup \{u\}$, the constructed graph $ G^\prime $ is a bipartite graph. \par 
	\begin{figure}
		\begin{tikzpicture}
		\node at (-2.4,2.5){\textbullet};  \node at (-2.4,2.7){{\tiny 1}};
		\node at (-1.8,2.5){\textbullet};  \node at (-1.8,2.7){{\tiny 2}};
		\node at (-2.4,1.9){\textbullet};  \node at (-2.4,1.7){{\tiny 3}};
		\node at (-1.8,1.9){\textbullet};  \node at (-1.8,1.7){{\tiny 4}};
		\draw (-1.8,1.9)--(-2.4,1.9)--(-2.4,2.5)--(-1.8,2.5);
		\node at (-2,1.4){{\tiny Graph $ G $}};
		
		\node at (-1.2,2.1){$ \Rightarrow $};
		
		\node at(0,5) { \textbullet };	\node at(0,5.2) { {\tiny $ x_1 $}};
		\node at(1,5) { \textbullet };  \node at(1,5.2) {{\tiny  $ y_1 $}};
		\node at(2,5) { \textbullet };  \node at(2,5.2) { {\tiny $ c_1 $}};
		\node at(3,5) { \textbullet };  \node at(3,5.2) { {\tiny $ e_1 $}};
		\draw (0,5)--(1,5)--(2,5)--(3,5);
		
		\draw (1,5)..controls(4,2.5)..(7,1);
		\draw (1,5)--(8,4);
		\draw (1,5)..controls(2.5,2)..(8,0);
		\draw (1,5) to [bend right=27] (1,0);
		
		\node at(0,4) { \textbullet };  	\node at(0,3.8) {{\tiny $ b_1 $}};
		\node at(1,4) { \textbullet };		\node at(1,3.8) { {\tiny $ z_1 $} };
		\node at(2,4) { \textbullet };		\node at(2,3.8) { {\tiny $ d_1 $} };
		\node at(3,4) { \textbullet };		\node at(3,3.8) { {\tiny $ f_1 $} };
		\node at(-1,4) { \textbullet };		\node at(-1,3.8) { {\tiny $ a_1 $} };
		\draw (-1,4)--(0,4)--(1,4)--(2,4)--(3,4);
		
		\draw (0,5)--(0,4);
		\draw (1,5)--(1,4);
		\draw (2,5)--(2,4);
								
		\node at(7,5) { \textbullet }; \node at(7,5.2) { {\tiny $ x_2 $}};
		\node at(8,5) { \textbullet };\node at(8,5.2) {{\tiny  $ y_2 $}};
		\node at(9,5) { \textbullet };\node at(9,5.2) { {\tiny $ c_2 $}};
		\node at(10,5) { \textbullet };\node at(10,5.2) { {\tiny $ e_2 $}};
		\draw (7,5)--(8,5)--(9,5)--(10,5);
		
		\node at(7,4) { \textbullet };\node at(7,3.8) { {\tiny $ b_2$}};
		\node at(8,4) { \textbullet };\node at(8,3.8) { {\tiny $ z_2 $}};		
		\node at(9,4) { \textbullet };\node at(9,3.8) { {\tiny $ d_2 $}};
		\node at(10,4) { \textbullet };\node at(10,3.8) { {\tiny $ f_2 $}};
		\node at(6,4) { \textbullet };\node at(6,3.8) { {\tiny $ a_2 $}};
		\draw (6,4)--(7,4)--(8,4)--(9,4)--(10,4);

		\draw (7,5)--(7,4);
		\draw (8,5)--(8,4);
		\draw (9,5)--(9,4);
		
		\draw (8,5) to [bend left=20](1,4);
		\draw (8,5) to [bend left=10](0,1);
		\draw (8,5)--(7,1);
		\draw (8,5) to [bend left=14](1,0);
		\draw (8,5) to [bend left=15](8,0);
		
		\node at(0,0) { \textbullet };\node at(0,-0.2) { {\tiny $ b_3 $}};
		\node at(1,0) { \textbullet };\node at(1,-0.2) { {\tiny $ z_3 $}};
		\node at(2,0) { \textbullet };\node at(2,-0.2) { {\tiny $ d_3 $}};
		\node at(3,0) { \textbullet };\node at(3,-0.2) {{\tiny  $ f_3 $}};
		\node at(-1,0) { \textbullet };\node at(-1,-0.2) { {\tiny $ a_3 $}};
		\draw (-1,0)--(0,0)--(1,0)--(2,0)--(3,0);
		
		\node at(0,1) { \textbullet };\node at(-0.2,1.2) { {\tiny $ x_3 $}};
		\node at(1,1) { \textbullet };\node at(0.8,0.8) { {\tiny $ y_3 $}};
		\node at(2,1) { \textbullet };\node at(2.1,1.2) { {\tiny $ c_3 $}};
		\node at(3,1) { \textbullet };\node at(3,1.2) { {\tiny $ e_3 $}};
		\draw (0,1)--(1,1)--(2,1)--(3,1);	
		
		\draw (0,1)--(0,0);
		\draw (1,1)--(1,0);
		\draw (2,1)--(2,0);
		
		\draw (1,1) to [bend left=20] (1,4);
		\draw (1,1) to [bend left=7] (8,4);
		\draw (1,1) to [bend left=3] (7,5);
		\draw (1,1)--(8,0);
		
		\node at(7,0) { \textbullet };\node at(7,-0.2) { {\tiny $ b_4 $}};
		\node at(8,0) { \textbullet };\node at(8,-0.2) { {\tiny $ z_4 $}};
		\node at(9,0) { \textbullet };\node at(9,-0.2) { {\tiny $ d_4 $}};
		\node at(10,0) { \textbullet };\node at(10,-0.2) { {\tiny $ f_4 $}};
		\node at(6,0) { \textbullet };\node at(6,-0.2) { {\tiny $ a_4 $}};
		\draw (6,0)--(7,0)--(8,0)--(9,0)--(10,0);
			
		\node at(7,1) { \textbullet };\node at(6.6,1) { {\tiny $ x_4 $}};
		\node at(8,1) { \textbullet };\node at(7.8,1.25) { {\tiny $ y_4 $}};
		\node at(9,1) { \textbullet };\node at(9,1.2) { {\tiny $ c_4 $}};
		\node at(10,1) { \textbullet };\node at(10,1.2) { {\tiny $ e_4 $}};
		\draw (7,1)--(8,1)--(9,1)--(10,1);
		
		\draw (7,0)--(7,1);
		\draw (8,0)--(8,1);
		\draw (9,0)--(9,1);
		
		\draw (8,1)--(1,0);
		\draw (8,1) to [bend right=17](8,4);
		\draw (8,1)to [bend left=7](7,5);
		\draw (8,1)to [bend left=7](0,5);
		\draw (8,1)to [bend left=2](1,4);
		\draw (0,5) to [bend left=17] (8,5);
		\draw (0,5) to [bend left=27] (8,5);
		\draw (1,5) to [bend left=-17] (7,5);
		\draw (1,5) to [bend left=-27] (7,5);
		
		\node at (3.5,6.05){\textbullet};		\node at (3.5,5.9){{\tiny $ p_{12} $}};
		\node at (4.5,6.05){\textbullet};		\node at (4.5,5.9){{\tiny $ r_{12} $}};
		\node at (3.5,6.75){\textbullet};		\node at (3.5,7){{\tiny $ q_{12} $}};
		\node at (4.5,6.75){\textbullet};		\node at (4.5,7){{\tiny $ s_{12} $}};
		\draw (3.5,6.05)--(3.5,6.75);
		\draw (4.5,6.05)--(4.5,6.75);	
		
			\node at (3.5,4.18){\textbullet};		\node at (3.75,4.35){{\tiny $ r_{21} $}};
			\node at (4.5,4.18){\textbullet};		\node at (4.75,4.35){{\tiny $ p_{21} $}};
			\node at (3.5,3.58){\textbullet};		\node at (3.75,3.5){{\tiny $ s_{21} $}};
			\node at (4.5,3.58){\textbullet};		\node at (4.75,3.5){{\tiny $ q_{21} $}};
			\draw (3.5,4.18)--(3.5,3.58);
			\draw (4.5,4.18)--(4.5,3.58);
				
		\draw (0,5) to [bend left=11] (1,1);
		\draw (0,5) to [bend left=27] (1,1);
		\draw (1,5) to [bend right=11] (0,1);
		\draw (1,5) to [bend right=27] (0,1);
		
		\node at (-0.15,2.5){\textbullet};		\node at (-0.3,2.65){{\tiny $ p_{31} $}};
		\node at (0,3.25){\textbullet};			\node at (-0.2,3.45){{\tiny $ r_{31} $}};
		\node at (-0.75,2.5){\textbullet};		\node at (-0.9,2.65){{\tiny $ q_{31} $}};
		\node at (-0.55,3.25){\textbullet};		\node at (-0.75,3.45){{\tiny $ s_{31} $}};
		\draw (-0.15,2.5)--(-0.75,2.5);
		\draw (0,3.25)--(-0.55,3.25);

		\node at (1.15,2.5){\textbullet};		\node at (1.25,2.3){{\tiny $ r_{13} $}};
		\node at (1,3.25){\textbullet};			\node at (1.1,3.05){{\tiny $ p_{13} $}};
		\node at (1.75,2.5){\textbullet};		\node at (1.9,2.3){{\tiny $ s_{13} $}};
		\node at (1.6,3.25){\textbullet};		\node at (1.6,3.05){{\tiny $ q_{13} $}};
		\draw (1.15,2.5)--(1.75,2.5);
		\draw (1,3.25)--(1.6,3.25);
		
		\draw (0,1) to [bend left=17] (8,1);
		\draw (0,1) to [bend left=27] (8,1);
		\draw (1,1) to [bend left=-17] (7,1);
		\draw (1,1) to [bend left=-27] (7,1);
		
			\node at (3.5,2){\textbullet};		\node at (3.7,1.85){{\tiny $ p_{34} $}};
			\node at (4.5,2){\textbullet};		\node at (4.4,1.85){{\tiny $ r_{34} $}};
			\node at (3.5,2.6){\textbullet};	\node at (3.73,2.6){{\tiny $ q_{34} $}};
			\node at (4.5,2.6){\textbullet};	\node at (4.75,2.6){{\tiny $ s_{34} $}};
			\draw (3.5,2)--(3.5,2.6);
			\draw (4.5,2)--(4.5,2.6);

		\node at (3.5,0.2){\textbullet};		\node at (3.7,0.05){{\tiny $ r_{43} $}};
		\node at (4.5,0.2){\textbullet};		\node at (4.7,0.05){{\tiny $ p_{43} $}};
		\node at (3.5,-0.4){\textbullet};		\node at (3.7,-0.55){{\tiny $ s_{43} $}};
		\node at (4.5,-0.4){\textbullet};		\node at (4.7,-0.55){{\tiny $ q_{43} $}};
		\draw (3.5,0.2)--(3.5,-0.4);
		\draw (4.5,0.2)--(4.5,-0.4);
		
			\node at (3.5,-1){{\tiny Graph $ G^\prime $}};

		\node at (5.2,5.2){\textbullet};  \node at (5.2,5.4){{\tiny $ u $}};
		\draw (0,5) to [bend right=10](5.2,5.2);
		\draw (1,4)--(5.2,5.2);
		\draw (7,5) to [bend right=10](5.2,5.2);
		\draw (8,4)--(5.2,5.2);
		\draw (7,1) to [bend left=10](5.2,5.2);
		\draw (8,0)  to [bend left=10](5.2,5.2);
		\draw (0,1) to [bend left=-1](5.2,5.2);
		\draw (1,0)  to [bend left=02](5.2,5.2);
		
		\node at (4.3,0.9){\textbullet}; \node at (4.3,1.1){{\tiny$ t $}};
		\draw (4.3,0.9) to [bend left=15](1,1);
		\draw (4.3,0.9) to [bend left=5](1,5);
		\draw (4.3,0.9) to [bend right=15](8,1);
		\draw (4.3,0.9) to [bend right=10](8,5);
		
		\draw (5.2,5.2) to [bend left=15](4.3,0.9);
		\end{tikzpicture}
		\caption{Example construction of graph $ G^\prime$ from graph $ G $}\label{cbg294}
	\end{figure}
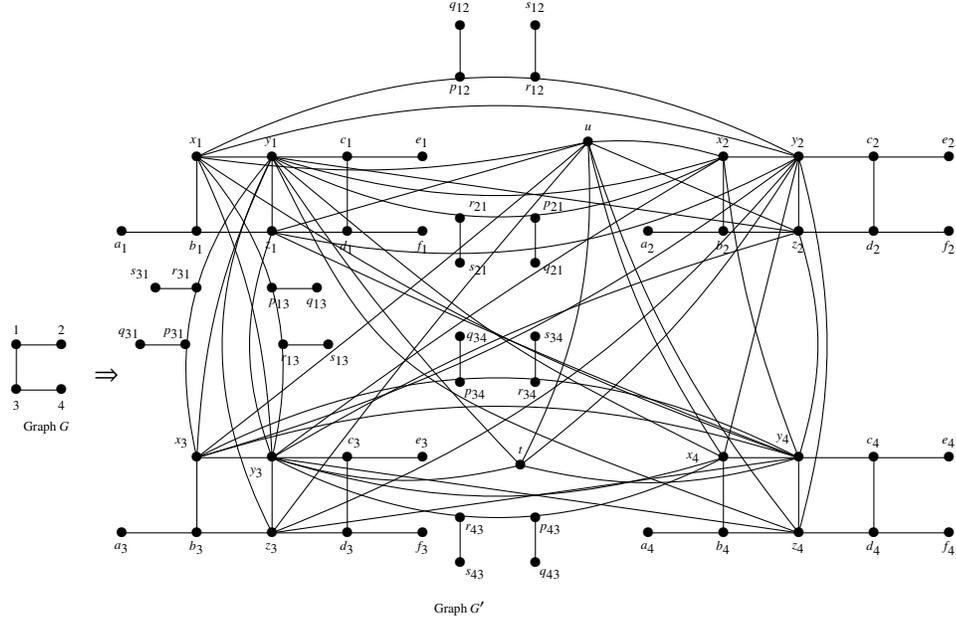
	Let $ C $ be a cycle in $ G^\prime $ of length greater than $ 4.$ If $ C $ is a cycle within a component $ G_i $ for some $ i $, then clearly it contains $ y_i $. Otherwise, if $ C $ is a cycle formed with vertices from more than one $ G_i $ component then it contains either edge $ (x_k, y_l) $ or $ (z_k,y_l) $.  
	 Therefore, each cycle of length greater than $ 4 $ contains at least one vertex $ y_i \in Y.$ If $ C $ contains exactly one $ y_i \in Y, $ (i) if $ C = G_i$  then $ (y_i,z_i) $ is a chord, (ii) if $ C $ contains $ u $  then $ (u, z_j) $ is a chord, and (iii) if $ C $ contains $ t $ then $ (y_i, z_j) $ is a chord. If $ C $ contains at least two vertices $ y_i, y_j $ from $ Y $ and (i) if $ C $ contains $ c_i $ or $ c_j $ then $ (y_i,z_i)$ or $ (y_j,z_j) $ is a chord, (ii) if $ C $ contains $ r_{ij} $ or $ r_{ji} $ then $ (y_i,c_j) $ is a chord, (iii) since vertices $ u $ and $t$ are adjacent to every vertex $ v^\prime \in X \cup Z $ and $u^\prime \in Y $ respectively, if $ C $ contains $ t $ or $ u $ then there exists a chord. Therefore, $ G^\prime$ is a chordal bipartite graph and can be constructed in polynomial time. An example construction of graph $ G^\prime $ from graph $G$ is illustrated in Figure \ref{cbg294}.\\[4pt]   
	We show that $ G $ has a vertex cover of size at most $ k $ if and only if $ G^\prime $ has a SCDS of size at most $ 7n+8m+k+2. $ Let $ VC $ be a vertex cover of $ G $ of size at most $ k $. Let $ S=\{a_i,b_i,c_i,d_i,e_i,f_i : i \in V\} \cup \{p_{ij},q_{ij},r_{ij},s_{ij},p_{ji},q_{ji},r_{ji},s_{ji} : (i,j) \in E\} \cup \{x_i,y_i : i \in VC\} \cup \{z_i : i \notin VC\}$ $ \cup$ $\{t,u\} $. It can be verified that $ S $ forms a SCDS of $ G^\prime $ and $\vert S \vert = 6n+8m+2k+(n-k)+2= 7n+8m+k+2.$ \\[4pt]
	\noindent Conversely, suppose $ S^\prime $ is a SCDS of size at most $ 7n+8m+k+2.$ 
	\begin{claim}\label{xiyi}
		If $ x_i \in S^\prime $ then without loss of generality, $ y_i \in S^\prime$ and vice versa.
	\end{claim}	 
	\begin{proof}[Proof of claim]
		Let $ x_i \in S^\prime $. Since $ S^\prime $ is a CDS, then it is true that either $ y_i \in S^\prime $ or $ z_i \in S^\prime.$ Then, take without loss of generality, $ y_i \in S^\prime. $ Analogously, if $ y_i \in S^\prime $, then either $ x_i \in S^\prime $ or $ z_i \in S^\prime.$  Then, take without loss of generality, $ x_i \in S^\prime.$
	\end{proof}
	\begin{claim}
		If $ S^\prime $ is a SCDS of $ G^\prime $ with $\vert  S^\prime \cap \{t,u\} \vert < 2 $ then there exists a SCDS of $ G^\prime $ with the same size and $ \vert  S^\prime \cap \{t,u\} \vert = 2 $.
	\end{claim}	 
	\begin{proof}[Proof of claim]
		Since $ X \cup Z \cup \{t\} $ and $ Y \cup \{u\} $ forms a complete bipartite subgraph in $ G^\prime $, if $ S^\prime $ is a SCDS of $ G^\prime $ and $ t,u \notin S^\prime $ then there exists two vertices $ v_1 \in S^\prime \cap Y $, $ v_2 \in S^\prime \cap (X \cup Z) $ such that $ (S^\prime \setminus \{v_1,v_2\}) \cup \{t, u\}$ is also a SCDS of $ G^\prime.$ With the similar argument, if $ t \notin S^\prime $ (or $ u\notin S^\prime $) then there exists a vertex $ v_1 \in S^\prime \cap Y $ (or $ v_2 \in S^\prime \cap X $) such that $ (S^\prime \setminus \{ v_1 \}) \cup \{t\} $ (or $ (S^\prime \setminus \{ v_2 \}) \cup \{u\} $) is a SCDS of $ G^\prime. $  Hence the claim.
	\end{proof}	
	\noindent Let $ S_1 = \{a_i,b_i,c_i,d_i,e_i,f_i : 1 \le i \le n\}$ and  $S_2= \{p_{ij},q_{ij},r_{ij},s_{ij},p_{ji},q_{ji},r_{ji},s_{ji} : (i,j) \in E\}$. From Proposition \ref{p1}, it is true that $ S_1 \subset S^\prime$, and also $ S_2 \subset S^\prime.$ Let $ S^*= S^\prime \setminus (S_1 \cup S_2 \cup \{t,u\}). $ Clearly, $ \vert S^* \vert \le n+k.$ Let $\vert S^* \cap X \vert = k^\prime $.  From claim \ref{xiyi}, clearly $ \vert S^* \cap (X \cup Y) \vert =2k^\prime.$ Since $ S^\prime $ is also a CDS of $ G^\prime$, $ \vert S^* \cap Z \vert = n-k^\prime $. Thus,
	$$2k^\prime+(n-k^\prime) \le n+k $$
	\begin{equation}\label{equa}
	k^\prime \le k
	\end{equation}
	\begin{claim}
		If $ VC =\{i : x_i, y_i \in S^\prime \},$ then $ VC $ forms a vertex cover in $ G. $ 
	\end{claim}	 
	\begin{proof}[Proof of claim]
		Let $ (i,j) \in E.$ From the construction of $ G $, it can be observed that there is no path from $ p_{ij} $ to $ b_i $ without  $ x_i $ or $ y_j $. Since $ S^\prime $ is connected, it should contain either $x_i$ or $y_j$ for each $ G_{ij} $. Similar argument can be made for each $ G_{ji}.$ Therefore, for each $ (i,j) \in E $ either $ x_i,y_i \in S^\prime $ or $ x_j,y_j \in S^\prime $. Hence, $ VC $ is a vertex cover in $ G $.		
	\end{proof}	 
	Therefore, from above claim and equation (\ref{equa}), clearly there exists a vertex cover of size at most $ k. $
\end{proof}
\subsection{Secure connected domination for chain graphs}
In this section, we propose a method to compute a minimum SCDS of a chain graph in linear time. A bipartite graph $ G=(X,Y,E) $ is called a \textit{chain graph} if the neighborhoods of the vertices of $ X $ form a \textit{chain}, that is, the vertices of $ X $ can be linearly ordered say, $ x_1,x_2,\ldots,x_p,$ such that $ N(x_1)\subseteq N(x_2) \subseteq\ldots \subseteq N(x_p).$ If a bipartite graph $ G=(X,Y,E) $ is a chain graph, then the neighborhoods of the vertices of $ Y $ also form a chain. An ordering $ \alpha=(x_1,x_2,\ldots,x_p,y_1,y_2,\ldots,y_q)$ of $ X\cup Y $ is called a \textit{chain ordering} if $N_G(x_1)\subseteq N_G(x_2) \subseteq\ldots \subseteq N_G(x_p)$ and $ N_G(y_1)\supseteq N_G(y_2) \supseteq\ldots \supseteq N_G(y_q)$. Every chain graph admits a chain ordering \cite{citekey}.
\begin{theorem}
	SCDM is linear time solvable for chain graphs.
\end{theorem}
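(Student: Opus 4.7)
The plan is to design a linear-time algorithm that computes a minimum SCDS of a chain graph by exploiting the chain ordering. In the preprocessing stage I would compute the chain ordering $\alpha=(x_1,\ldots,x_p,y_1,\ldots,y_q)$ in $O(|V|+|E|)$ time using a standard chain-graph recognition algorithm; in the same pass I would identify the pendants and supports, using the fact that $x_i\in X$ is pendant iff $N(x_i)=\{y_1\}$ (and symmetrically for $Y$), so that $y_1$ and $x_p$ are the only possible support vertices. Two structural cornerstones drop out of $\alpha$: $x_p$ dominates all of $Y$ and $y_1$ dominates all of $X$ in a connected chain graph, so $\{x_p,y_1\}$ is already a CDS of $G$.

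The construction starts from the mandatory core $P(G)\cup S(G)$ dictated by Proposition \ref{p1} and augments it with a constant-size \emph{backup} set drawn from $\{x_{p-1},x_p,y_1,y_2\}$, whose sole purpose is to guarantee that every defender swap required by the security condition still yields a CDS. The precise content of the backup is determined by a short case analysis on the chain ordering: whether $G$ is a star, whether $\min(p,q)\le 2$, whether pendants occur on each side, and so on. In every case the backup set is read off from $\alpha$ in $O(1)$ time, so the full SCDS is produced in linear time.

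Correctness has two directions. For the upper bound I would verify in each case that every $u\notin S$ has a defender $v\in S$ for which $(S\setminus\{v\})\cup\{u\}$ is still a CDS; the backup vertices preserve connectivity after the swap, while one of $y_1,y_2$ (respectively $x_p,x_{p-1}$) always supplies the defender thanks to the nested neighborhoods given by $\alpha$. For the lower bound I would argue that dropping any element of $S$ either contradicts Proposition \ref{p1}, destroys connectivity or domination, or exhibits an unavoidable security witness that fails after the forced swap.

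The main obstacle is the lower-bound portion of the case analysis. Showing that the backup vertices are truly unavoidable becomes subtle when $\min(p,q)$ is small (for instance, $\gamma_{sc}(K_{2,2})=3$ rather than $4$) or when the pendant structure already places some backup vertex into the mandatory core, in which case the minimum SCDS size drops by one and the algorithm must recognise the configuration to avoid overcounting. Handling these boundary configurations correctly is where the chain ordering genuinely pays off and keeps the total running time linear.
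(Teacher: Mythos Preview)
Your proposal follows the same route as the paper: compute a chain ordering, take $S=\{x_{p-1},x_p,y_1,y_2\}\cup P(G)$ as the candidate minimum SCDS, verify the upper bound by exhibiting defenders via the nested neighbourhoods, and obtain the lower bound from Proposition~\ref{p1} together with the observation that any SCDS must meet each side in at least two vertices whenever there are outside vertices on that side to defend. The paper separates out only the star case $\min(p,q)=1$ and otherwise asserts $\gamma_{sc}(G)=|S|$ without the finer case analysis you propose; your caution about $K_{2,2}$ is in fact a correction rather than an unnecessary complication, since the paper's formula gives $4$ there while $\gamma_{sc}(K_{2,2})=3$, so your boundary-case handling genuinely patches a gap in the paper's argument.
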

\begin{proof}
	Let $ G=(X,Y,E) $ be a chain graph with chain ordering $ \{x_1,x_2,\ldots,x_p$, $y_1,y_2,\ldots,y_q\}$. If $ p=1$ or $q=1$ then $ G $ is a complete bipartite graph and clearly, $ \gamma_{sc}(G)=\vert X \cup Y\vert$. Otherwise, Let $ S= \{y_1,y_2,x_{p-1},x_p\} \cup P$, where $ P $ contains all the pendant vertices of $ G.$ It can be observed that for every vertex $ u \in V\setminus S $ there exists a vertex $ v \in S $ such that $ (S \setminus \{v\}) \cup \{u\}$ is a CDS of $ G$. Hence, $ S $ is a SCDS of $ G$ and $ \gamma_{sc}(G) \le \vert S \vert. $ \par
	Let $ S^\prime $ be any SCDS of $ G $, then we show that $ \vert S^\prime \vert$ $\ge$ $\vert S \vert.$ Note that if $ X \cap P \ne \emptyset$ ($ Y \cap P \ne \emptyset$) then $ y_1 $ ($ x_p $) is a support vertex. It is known that every SCDS contains all the pendant and support vertices of $ G.$ If $ P \ne \emptyset$ (Figure \ref{chaingraphs}(a) \& (b)) then clearly $ \vert S^\prime \vert$ $\ge$ $\vert S \vert.$ Otherwise, if $ \vert (S^\prime \cap Y) \vert < 2 $ then there exists a vertex $ u \in X \setminus S^\prime$ for which there is no vertex  $v \in S^\prime  $ such that $ (S^\prime\setminus\{v\}) \cup \{u\} $ is a CDS of $ G.$ Thus, $ \vert (S^\prime \cap Y) \vert \ge 2 $ (Figure \ref{chaingraphs}(c)). Similarly, $ \vert (S^\prime \cap X) \vert \ge 2.$ Hence, $ \vert S^\prime \vert$ $\ge$ $\vert S \vert.$ \par
	\begin{figure}
		\begin{center}
			\begin{tikzpicture}
			\node at (0,0){ \textbullet}; \node at (0,-0.2){$ y_1 $};
			\node at (1,0){ \textbullet}; \node at (1,-0.2){$ y_2 $};
			\node at (2,0){ \textbullet}; \node at (2,-0.2){$ y_3 $};
			
			\node at (0,2){\textbullet}; \node at (0,2.2){$ x_1 $};
			\node at (1,2){\textbullet}; \node at (1,2.2){$ x_2 $};
			\node at (2,2){\textbullet}; \node at (2,2.2){$ x_3 $};
			
			\draw (0,2)--(0,0)--(1,2);
			\draw (0,0)--(2,2)--(1,0);
			\draw (2,2)--(2,0);
			
			\node at (1,-0.8){(a)};

			\node at (4,0){ \textbullet}; \node at (4,-0.2){$ y_1 $};
			\node at (5,0){ \textbullet}; \node at (5,-0.2){$ y_2 $};
			\node at (6,0){ \textbullet}; \node at (6,-0.2){$ y_3 $};
			\node at (7,0){ \textbullet}; \node at (7,-0.2){$ y_4 $};
			
			\node at (4,2){\textbullet}; \node at (4,2.2){$ x_1 $};
			\node at (5,2){\textbullet}; \node at (5,2.2){$ x_2 $};
			\node at (6,2){\textbullet}; \node at (6,2.2){$ x_3 $};
			\node at (7,2){\textbullet}; \node at (7,2.2){$ x_4 $};
			
			\draw (4,2)--(4,0)--(5,2);
			\draw (7,2)--(4,0)--(6,2)--(5,0);
			\draw (6,0)--(7,2)--(7,0);
			\draw (4,2)--(5,0)--(5,2);
			\draw (5,0)--(7,2);
			
			\node at (5.5,-0.8){(b)};
			
			\node at (9,0){ \textbullet}; \node at (9,-0.2){$ y_1 $};
			\node at (10,0){ \textbullet}; \node at (10,-0.2){$ y_2 $};
			\node at (11,0){ \textbullet}; \node at (11,-0.2){$ y_3 $};
			\node at (12,0){ \textbullet}; \node at (12,-0.2){$ y_4 $};
			
			\node at (9,2){\textbullet}; \node at (9,2.2){$ x_1 $};
			\node at (10,2){\textbullet}; \node at (10,2.2){$ x_2 $};
			\node at (11,2){\textbullet}; \node at (11,2.2){$ x_3 $};
			\node at (12,2){\textbullet}; \node at (12,2.2){$ x_4 $};
			
			\draw (9,2)--(9,0)--(10,2);
			\draw (12,2)--(9,0)--(11,2)--(10,0);
			\draw (11,2)--(11,0)--(12,2)--(12,0);
			\draw (9,2)--(10,0)--(10,2);
			\draw (10,0)--(12,2);
			\draw (11,2)--(12,0);
			
			\node at (10.5,-0.8){(c)};
			\end{tikzpicture}
			\caption{SCDS in Chain graphs}\label{chaingraphs}
		\end{center}
	\end{figure}
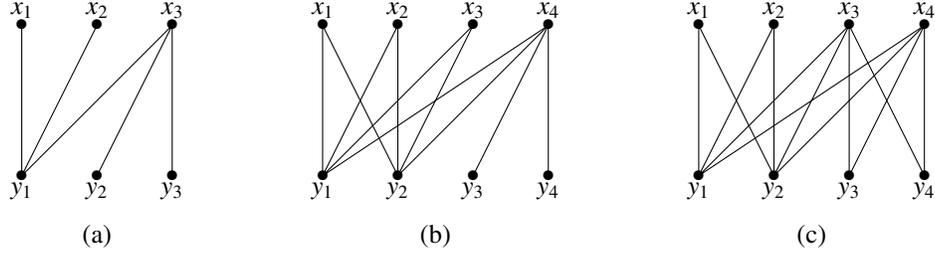
	In a chain graph $ G=(X,Y,E) $, a chain ordering and the set $ P $ of all pendant vertices can be computed in linear time \cite{chain}. Therefore, SCDM in chain graphs can be solved in linear time.   
\end{proof}
\subsection{Secure connected domination for bounded tree-width graphs}
Let $ G $ be a graph, $ T $ be a tree and $v\ $ be a family of vertex sets $ V_t \subseteq V (G) $ indexed by the vertices $ t $ of $ T $ . 
The pair $ (T, v\ ) $ is called a tree-decomposition of $ G $ if it satisfies the following three conditions:
(i) $ V(G)  =  \bigcup_{t \in V(T)} V_t $,
(ii) for every edge $ e \in E(G) $ there exists a $ t \in V(T)$ such that both ends of $ e $ lie in $ V_t $,
(iii) $V_{t_1} \cap   V_{t_3}  \subseteq V_{t_2}$ whenever $ t_1$, $t_2$, $t_3 \in V(T) $ and $ t_2 $ is on the path in $ T $ from $ t_1 $ to $ t_3 $.
The width of $ (T, v\ ) $ is the number $ max\{\vert V_t \vert-1 : t\in T \}$, and the tree-width $ tw(G) $ of $ G $ is the minimum width of any tree-decomposition of $ G $. By Courcelle's Thoerem, it is well known that every graph problem that can be described by counting monadic second-order logic (CMSOL) can be solved in linear-time in graphs of bounded tree-width, given a tree decomposition as input \cite{courc}. We show that SCDM problem can be expressed in CMSOL. 
\begin{theorem}[\textit{Courcelle's Theorem}](\cite{courc})\label{cmsol1}
	Let $ P $ be a graph property expressible in CMSOL and let $ k $ be
	a constant. Then, for any graph $ G $ of tree-width at most $ k $, it can be checked in
	linear-time whether $ G $ has property $ P $.
\end{theorem}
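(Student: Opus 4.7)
The plan is to rely on the classical proof of Courcelle's Theorem from \cite{courc}; since the statement is being recalled here as a tool rather than re-established, I would at most sketch the standard argument at a high level, since the paper itself only needs to cite the result in order to apply it to SCDM in the next step.

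First I would standardize the input tree decomposition $(T,v)$ into a \emph{nice} tree decomposition, in which every node is of one of four types (leaf, introduce-vertex, forget-vertex, or join), while preserving the width and increasing the number of nodes by at most a linear factor in $|V(G)|$. This normalization is what makes a compositional analysis along $T$ possible, because each operation only modifies the current bag by a single vertex or joins two bags that agree on their contents.

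Next I would associate with each node $t$ and the partial graph $G_t$ it induces a finite signature recording the CMSOL-type of $(G_t, V_t)$ with respect to formulas of quantifier rank at most $r$, where $r$ is the quantifier rank of the formula defining $P$. Because every bag has size at most $k+1$ and $r$ is fixed, the number of possible such types is bounded by a constant depending only on $P$ and $k$, so these signatures can be stored in $O(1)$ space per node.

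The main obstacle, and the technically heaviest step, is establishing a compositionality lemma showing that the type at a node is completely determined by the types of its children together with the operation applied at that node. This calls for a Feferman--Vaught style argument adapted to monadic second-order logic augmented with counting predicates, which is precisely where the constant hidden in the linear-time bound becomes non-elementary in the formula size. Once this lemma is available, the types propagate bottom-up via a finite tree automaton whose acceptance condition coincides with $G \models P$; running this automaton on $T$ takes time linear in $|V(T)| = O(|V(G)|)$, yielding the claimed bound and completing the sketch.
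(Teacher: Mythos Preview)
Your reading is correct: the paper does not prove this theorem at all---it simply quotes Courcelle's Theorem from \cite{courc} as a black-box tool and immediately moves on to Theorem~\ref{cmsol2}. Your high-level sketch (nice tree decomposition, bounded CMSOL-types per bag, Feferman--Vaught compositionality, bottom-up automaton run) is the standard outline and already goes well beyond what the paper itself supplies, so there is nothing to compare or correct.
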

\begin{theorem}\label{cmsol2}
	Given a graph $ G $ and a positive integer $ k $, SCDM can be expressed in CMSOL.
\end{theorem}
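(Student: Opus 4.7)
The plan is to exhibit a CMSOL formula $\Phi(S)$ with one free set variable $S$ such that $G\models\Phi(S)$ iff $S$ is a SCDS of $G$, and then to conjoin it with a cardinality predicate to encode $|S|\le k$. Since CMSOL is closed under Boolean connectives and under both first-order (vertex) and monadic second-order (set) quantification, the construction is essentially a direct translation of the definition.

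The building blocks are standard. Writing $\mathrm{adj}(u,v)$ for the edge predicate, the domination condition is
\[
\mathrm{Dom}(S)\ \equiv\ \forall u\,\bigl(u\in S\ \lor\ \exists v\,(v\in S\,\land\,\mathrm{adj}(u,v))\bigr),
\]
and the classical MSOL encoding of connectivity (via the absence of a nontrivial crossing-edge-free bipartition) gives
\[
\mathrm{Conn}(S)\ \equiv\ \forall A\,\bigl[(A\subseteq S\,\land\,A\neq\emptyset\,\land\,A\neq S)\to\exists a\,\exists b\,(a\in A\,\land\,b\in S\setminus A\,\land\,\mathrm{adj}(a,b))\bigr].
\]
I then take $\mathrm{CDS}(S)\equiv\mathrm{Conn}(S)\land\mathrm{Dom}(S)$.

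Next I encode the secure condition without introducing a fresh set variable for the swap. Given $u,v$, the derived membership formula $\mu_{S,u,v}(w)\equiv(w\in S\,\land\,w\neq v)\lor w=u$ describes the set $(S\setminus\{v\})\cup\{u\}$; substituting $\mu_{S,u,v}(z)$ syntactically for every atomic occurrence ``$z\in S$'' inside $\mathrm{CDS}(S)$ yields a formula $\mathrm{CDS}^{\ast}(S,u,v)$ asserting that the swapped set is a CDS. The full formula is
\[
\Phi(S)\ \equiv\ \mathrm{CDS}(S)\,\land\,\forall u\,\bigl[u\notin S\to\exists v\,(v\in S\,\land\,\mathrm{adj}(u,v)\,\land\,\mathrm{CDS}^{\ast}(S,u,v))\bigr].
\]
For the bound $|S|\le k$, since $k$ is an explicit input parameter, it is first-order expressible as $\exists x_1\cdots\exists x_k\,\forall y\,(y\in S\to y=x_1\lor\cdots\lor y=x_k)$; conjoining with $\Phi(S)$ and existentially quantifying $S$ gives the required CMSOL sentence.

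The main point that needs care is the substitution producing $\mathrm{CDS}^{\ast}$: every bound set variable inside $\mathrm{Conn}$, most notably the auxiliary $A$, must be interpreted relative to the swapped set rather than to $S$ itself. Because $\mu_{S,u,v}$ is quantifier-free in its free variables, the substitution commutes with all CMSOL quantifiers and preserves semantics, so $\mathrm{CDS}^{\ast}(S,u,v)$ holds iff $(S\setminus\{v\})\cup\{u\}$ is a CDS. This is the only conceptual step; the remainder is a mechanical translation, and the resulting $\Phi$ has constant quantifier-alternation depth independent of $G$ and $k$, which will allow Theorem \ref{cmsol1} to be invoked subsequently.
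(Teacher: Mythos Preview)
Your proposal is correct and follows essentially the same approach as the paper: both build the SCDM formula from a domination predicate, a connectivity predicate encoded via the absence of a crossing-edge-free bipartition, and a universal--existential quantification over the swap $(S\setminus\{v\})\cup\{u\}$. You are in fact more explicit than the paper about how the swapped set is expressed (via your membership formula $\mu_{S,u,v}$), whereas the paper simply writes $\mathrm{Dominating}((S\setminus\{y\})\cup\{x\})$ and $\mathrm{Connected}((S\setminus\{y\})\cup\{x\})$ without spelling out the substitution.
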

\begin{proof}
	First, we present the CMSOL formula which expresses that the graph $ G $ has a dominating set of size at most $ k.$
{\small 	$$Dominating(S)= (\vert S \vert \le k)  \land (\forall p)((\exists q)(q\in S \land adj(p,q))) \lor (p\in S)$$}
where $ adj(p, q) $ is the binary adjacency relation which holds if and only if, $ p, q $ are two adjacent vertices of $ G.$
	$ Dominating(S) $ ensures that for every vertex $ p \in V $, either $ p\in S $ or $ p $ is adjacent to a vertex in $ S$ and the cardinality of $ S $ is at most $ k.$ 
	For a set $ S \subseteq V,$ the induced subgraph $ G[S] $ is disconnected if and only if the set $ S $ can be partitioned into two sets $ S_1 $ and $ S_2 $ such that there is no edge between a vertex in $ S_1 $ and a vertex in $ S_2 $. The CMSOL formula to express that the induced subgraph $ G[S] $ is connected as follows.
	{\small $$Connected(S)= \neg (\exists S_1, S_1 \subseteq S, \neg(\exists e \in E, \exists u \in S_1, \exists v \in S \setminus S_1, (inc(u,e) \land inc(v,e)) ))$$}
	where $ inc(v, e) $ is the binary incidence relation which hold if and only if edge $ e $ is incident to vertex $ v $ in $ G.$
	Now, by using the above two CMSOL formulas we can express SCDM in CMSOL formula as follows.\\
{\small 	$ SCDM(S)=Dominating(S) \land Connected(S) \land (\forall x)((x \in S) \lor$} \\{\small \hspace*{2cm}$((\exists y) (y \in S \land Dominating((S \setminus \{y\}) \cup \{x\}) \land Connected((S \setminus \{y\}) \cup \{x\}))))  $}\\
	Therefore, SCDM can be expressed in CMSOL.
\end{proof}
\noindent Now, the following result is immediate from Theorems \ref{cmsol1} and \ref{cmsol2}.
\begin{theorem}
		SCDM can be solvable in linear time for bounded tree-width graphs. 
\end{theorem}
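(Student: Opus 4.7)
The plan is to obtain the result as an immediate corollary of the two theorems stated just before it, so the proof reduces to an appeal to already-established machinery rather than any new combinatorial argument. Specifically, Theorem \ref{cmsol2} produces a CMSOL formula $SCDM(S)$ whose satisfaction expresses exactly that $S$ is a secure connected dominating set of $G$ of size at most $k$, and Theorem \ref{cmsol1} (Courcelle's Theorem) guarantees that any property expressible in CMSOL can be decided in linear time on graphs of bounded tree-width.

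My plan is therefore: first, fix a constant $k$ bounding the tree-width of the input graph $G$. Second, consider the property $P$ that ``$G$ admits a vertex subset $S$ satisfying $SCDM(S)$''. This property is obtained from the formula in Theorem \ref{cmsol2} by existentially quantifying over the free set variable $S$, so $P$ remains expressible in CMSOL. Third, invoke Theorem \ref{cmsol1} on $P$ and $G$ to conclude that membership in $P$ can be tested in linear time.

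The only minor subtlety worth mentioning in the write-up is that Courcelle's Theorem is formally stated for graphs given together with a tree-decomposition of bounded width; for graphs of tree-width at most $k$ without such a decomposition in hand, one first applies Bodlaender's linear-time algorithm to produce one, so the overall running time remains linear in $|V|+|E|$. There is no real combinatorial obstacle to overcome here, as the difficult work has already been done in establishing that SCDM is CMSOL-expressible; once that is in place the result follows directly.
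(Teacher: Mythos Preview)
Your proposal is correct and takes essentially the same approach as the paper, which simply states that the result is immediate from Theorems~\ref{cmsol1} and~\ref{cmsol2}. Your write-up is in fact more detailed than the paper's, since you explicitly note the existential quantification over $S$ and the use of Bodlaender's algorithm to obtain a tree-decomposition in linear time.
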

\section{Approximation Results}
In this section, we obtain upper and lower bounds on the approximation ratio of the MSCDS problem. We also show that the MSCDS problem is APX-complete for graphs with maximum degree $ 4 $.
\subsection{Approximation Algorithm}
Here, we propose a $\Delta(G)+1$ approximation algorithm for the MSCDS problem. In this, we will make use of two known optimization problems, MINIMUM DOMINATION and MINIMUM CONNECTED DOMINATION. The following two theorems are the approximation results which have been obtained for these two problems.
\begin{theorem}(\cite{clrs})\label{appdom}
	The MINIMUM DOMINATION problem in a graph with maximum degree $ \Delta(G) $ can be approximated with an approximation ratio of $1+\ln (\Delta(G)+1).$
\end{theorem}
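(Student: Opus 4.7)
The plan is to derive this bound by reducing MINIMUM DOMINATION to the weighted SET COVER problem and then invoking the classical greedy analysis. Given an instance $G=(V,E)$ of MINIMUM DOMINATION, form a SET COVER instance whose universe is $V$ itself and whose family of sets is $\mathcal{F} = \{N[v] : v \in V\}$. A subset $D \subseteq V$ is a dominating set of $G$ if and only if $\bigcup_{v\in D} N[v] = V$, so the optimum values of the two problems coincide; moreover a feasible set cover composed of sets $N[v_{i_1}],\ldots,N[v_{i_t}]$ corresponds bijectively to the dominating set $\{v_{i_1},\ldots,v_{i_t}\}$.

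Next, I would run the standard greedy SET COVER algorithm: at each step, pick the vertex $v$ whose closed neighborhood covers the maximum number of still-uncovered vertices, add $v$ to $D$, and mark $N[v]$ as covered. The main structural observation I would invoke is the well-known charging argument of Chv\'atal/Johnson/Lov\'asz: if one assigns each newly covered element $u$ a cost $1/k$ where $k$ is the number of elements covered in the step that first covers $u$, then the total cost paid equals $|D|$, and each optimal set contributes at most $H(d)$ to the total cost, where $d$ is the cardinality of the largest set in $\mathcal{F}$. This yields $|D| \le H(d)\cdot \mathrm{OPT}$.

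To finish, I would use that in our reduction the largest set has size $\max_{v\in V}|N[v]| = \Delta(G)+1$, so $d = \Delta(G)+1$. Combining with the standard harmonic-number estimate $H(k) = \sum_{i=1}^{k} 1/i \le 1 + \ln k$ for $k\ge 1$ gives the claimed approximation ratio
\[
\frac{|D|}{\mathrm{OPT}} \le H(\Delta(G)+1) \le 1 + \ln(\Delta(G)+1).
\]
The only mildly delicate step is justifying the charging/averaging inequality for each optimal set; this is the standard textbook analysis and is the step I would cite rather than re-derive. Since this is a classical result attributed in the excerpt to \cite{clrs}, my proof proposal essentially reproduces that textbook argument via the SET COVER reduction.
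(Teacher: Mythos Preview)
Your proposal is correct and reproduces exactly the standard textbook argument. The paper itself provides no proof of this theorem; it is stated as a known result with a citation to \cite{clrs}, and the argument you sketch (reduction of domination to SET COVER via closed neighborhoods, followed by the Chv\'atal--Johnson--Lov\'asz greedy analysis giving the $H(\Delta(G)+1)\le 1+\ln(\Delta(G)+1)$ bound) is precisely the proof one finds in that reference.
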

\begin{theorem}\label{appcdom}(\cite{guha})
	The MINIMUM CONNECTED DOMINATION problem in a graph with maximum degree $ \Delta(G) $ can be approximated with an approximation ratio of $ 3+\ln \Delta(G). $
\end{theorem}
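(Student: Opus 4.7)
The plan is to follow the two-phase greedy framework of Guha and Khuller. In Phase~1, I would build a (possibly disconnected) dominating set via a greedy rule that yields the $\ln \Delta(G)$ term; in Phase~2, I would stitch the resulting pieces together with a Steiner-tree style procedure that contributes only the additive constant. The overall idea is to reduce connected domination to an augmented set-cover problem so that the classical $\ln \Delta(G)$ lower bound analysis can be reused essentially verbatim.

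For Phase~1, I would maintain a three-coloring of $V$: \emph{white} (currently undominated), \emph{gray} (dominated by some chosen vertex but not itself chosen), and \emph{black} (chosen, i.e., placed into the tentative dominating set). Starting with every vertex white, iterate until no white vertex remains: among (i)~single vertices $v$ and (ii)~pairs $\{u,v\}$ where $u$ is already gray and $v$ is a white neighbor of $u$, pick the candidate that turns the largest number of whites non-white per newly chosen vertex; then recolor the picked vertex (or pair) black and propagate gray to their freshly covered neighbors. Option~(ii), the Guha--Khuller refinement, is crucial: every pair added is already adjacent to the existing black set, so the black subgraph stays ``partially connected'' as it grows, which is what lets the Phase~2 overhead be absorbed into a small additive constant.

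For Phase~2, observe that at the end of Phase~1 the black set $D_1$ dominates $G$ but may split into several black components. I would greedily merge them by locating a gray vertex whose neighborhood meets two different black components (cost~$1$) or, failing that, a pair of adjacent gray vertices bridging two components (cost~$2$), repeating until a single black component remains. A standard Steiner-tree counting argument bounds the number of extra vertices by $2(c-1)$, where $c$ is the number of black components after Phase~1. The analysis then uses a set-cover potential $\Phi$ equal to the harmonic sum of the number of current whites; the standard greedy charging gives Phase~1 cost at most $(1+\ln \Delta(G))\,\gamma_c(G)$, and a counting argument shows $c$ is small enough (thanks to the gray-pair rule) that Phase~2 adds at most $2\gamma_c(G)$, summing to $3 + \ln \Delta(G)$ times optimum.

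The main obstacle will be tightening the constant from the naive bound of $4 + \ln \Delta(G)$ — which one gets by treating the two phases as independent — down to $3 + \ln \Delta(G)$. This requires a merged amortized accounting in which Phase~1 pair-selections are recognized as already paying for some of the connectivity, so that the Phase~2 overhead is not double-charged against the same optimum vertices; producing this unified potential function, rather than two separate ones, is the technical crux of the proof.
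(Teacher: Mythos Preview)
The paper does not prove this theorem at all: it is stated with a citation to Guha and Khuller \cite{guha} and then used as a black box in the approximation algorithm of Section~3.1. There is therefore no ``paper's own proof'' to compare your sketch against.

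That said, your outline is a faithful high-level summary of the Guha--Khuller two-phase scheme (greedy coloring to get a dominating set with few pieces, then merging the pieces), and you have correctly identified that the gray-pair selection rule is what buys the improved constant. If you actually need to write out a full proof for some other purpose, be aware that your Phase~1 bound ``$(1+\ln\Delta(G))\,\gamma_c(G)$'' and Phase~2 bound ``$2\gamma_c(G)$'' together give $3+\ln\Delta(G)$ only if they are not both tight simultaneously; in the original analysis the saving comes from a more careful joint accounting (a potential combining the number of white vertices and the number of black components), not from simply adding the two separate bounds. Your final paragraph acknowledges this, but the sketch as written would need that unified potential made explicit before it constitutes a proof. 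For the purposes of the present paper, though, none of this is required: citing \cite{guha} is all the authors do, and all that is needed.
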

By theorems \ref{appdom} and \ref{appcdom}, let us consider APPROX-DOM-SET and APPROX-CDS are the approximation algorithms to approximate the solutions for MINIMUM DOMINATION and MINIMUM CONNECTED DOMINATION with approximation ratios of $ 1+\ln (\Delta(G)+1)$ and $ 3+\ln \Delta(G)$ respectively.\par
Now, we propose an algorithm APPROX-SCDS to produce an approximate solution for the MSCDS problem. In APPROX-SCDS, first we compute CDS $ D_c $ of a given graph $ G$ using APPROX-CDS. Next, we obtain the induced subgraph $ G^\prime $ from $V \setminus D_c$. By using APPROX-DOM-SET, we compute dominating set $ D $ of $ G^\prime$. Let $ D_{sc}=D_c \cup D.$ It can be easily observed that for every vertex $ u \in V \setminus D_{sc} $ there exists a vertex $ v \in D $ such that $ (D_{sc}\setminus \{v\}) \cup \{u\} $ is a CDS of $ G.$ Therefore, $ D_{sc} $ is a SCDS of $ G. $

\renewcommand{\algorithmicrequire}{\textbf{Input:}}
\renewcommand{\algorithmicensure}{\textbf{Output:}}
\begin{algorithm}
	\caption{APPROX-SCDS($ G $)}\label{scdsalgo}
	\begin{algorithmic}[1]
		\Require{\mbox{A simple and undirected bipartite graph $G$ }}
		\Ensure{\mbox{A SCDS $ D_{sc} $ of $G$.}}\\
		$ D_c  \gets$ \textbf{APPROX-CDS} ($ G $) \\
		Let $ G^\prime= G[V\setminus D_c] $ \\
		$ D \gets$ \textbf{APPROX-DOM-SET} ($G^\prime$)  \\
		$ D_{sc}\gets D_c \cup D $ \\
		\Return $ D_{sc}. $
		
	\end{algorithmic}
\end{algorithm}
\begin{theorem}\label{appscdom}
	The MSCDS problem in a graph $ G $ with maximum degree $ \Delta(G)$ can be approximated with an approximation ratio of $(\Delta(G)+1).$ 
\end{theorem}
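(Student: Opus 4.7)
The plan is to argue that Algorithm \ref{scdsalgo} always produces a valid SCDS and then bound its output size by $n$, while simultaneously bounding $\gamma_{sc}(G)$ from below by $n/(\Delta(G)+1)$. The ratio of these two bounds yields $(\Delta(G)+1)$.

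First, I would verify correctness of APPROX-SCDS. Since $D_c$ is already a CDS returned by APPROX-CDS and $D$ dominates $G'=G[V\setminus D_c]$, every vertex $u\in V\setminus D_{sc}$ has a neighbor $v\in D\subseteq D_{sc}$ in $G$. To confirm the secure condition, observe that for any such $u$ and its defender $v\in D$, the set $(D_{sc}\setminus\{v\})\cup\{u\}$ still contains $D_c$, which is already a CDS of $G$, so the swapped set remains a CDS of $G$. Hence $D_{sc}$ is indeed a SCDS, as already noted in the algorithm description.

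Next, I would establish the key counting bound. Trivially $|D_{sc}|\le |V|=n$. On the other hand, any SCDS is in particular a dominating set of $G$, so every vertex of $V$ is contained in the closed neighborhood of some vertex of the SCDS. Since each vertex has closed neighborhood of size at most $\Delta(G)+1$, any SCDS $S$ satisfies $|S|\cdot(\Delta(G)+1)\ge n$, i.e.,
\[
\gamma_{sc}(G)\;\ge\;\frac{n}{\Delta(G)+1}.
\]
Combining these two inequalities gives
\[
\frac{|D_{sc}|}{\gamma_{sc}(G)}\;\le\;\frac{n}{n/(\Delta(G)+1)}\;=\;\Delta(G)+1,
\]
which is the desired approximation ratio.

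Since the bound depends only on $|D_{sc}|\le n$ and on the universal lower bound for $\gamma_{sc}(G)$, there is no genuine obstacle; the approximation guarantees stated in Theorems \ref{appdom} and \ref{appcdom} are used merely to guarantee polynomial running time of the two subroutines, not to tighten the ratio. The only care needed is in the correctness verification for the secure swap, where retaining the connected dominating set $D_c$ after removing a defender $v\in D$ makes the argument immediate.
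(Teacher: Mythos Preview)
Your proposal is correct and follows essentially the same route as the paper: bound $|D_{sc}|\le n$, then use the standard lower bound $\gamma_{sc}(G)\ge \gamma(G)\ge n/(\Delta(G)+1)$ to obtain the ratio. The paper's proof is terser (it omits the correctness discussion you include, having placed it before the algorithm), but the chain of inequalities and the observation that the subroutine guarantees are needed only for polynomial running time are exactly as you describe.
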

\begin{proof}
	To prove the theorem, we show that SCDS produced by our algorithm APPROX-SCDS, $ D_{sc}$, is of size at most  $(\Delta(G)+1) $ times of $ \gamma_{sc}(G)$, i.e., $$ \vert D_{sc}\vert  \le (\Delta(G)+1)\gamma_{sc}(G) $$  
From the algorithm,
	\begin{center}
	$\vert D_{sc} \vert = \vert D_c \cup D \vert$ \\
	\hspace*{1.76cm}	$ = \vert D_c \vert  + \vert D \vert \le n $\\
	\hspace*{1.95cm}  $ \le (\Delta(G)+1) \gamma(G) $  \\
	\hspace*{2.12cm}  $ \le (\Delta(G)+1) \gamma_{sc}(G)  $	
	\end{center}
\end{proof}
Since the MSCDS problem in a graph with maximum degree $ \Delta(G) $ admits an approximation algorithm that achieves the approximation ratio of $(\Delta(G)+1)$, we immediately have the following corollary of theorem \ref{appscdom}.
\begin{cor}
	The MSCDS problem is in the class of APX when the maximum degree $ \Delta(G) $ is fixed.
\end{cor}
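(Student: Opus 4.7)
The plan is to derive the corollary as an almost immediate consequence of Theorem \ref{appscdom}, using the definition of the class APX. Recall that a minimization problem lies in APX if it admits a polynomial-time algorithm whose output is guaranteed to be within a constant factor of the optimum. So the two things I need to verify are (a) that APPROX-SCDS runs in polynomial time, and (b) that its approximation ratio is a constant whenever $\Delta(G)$ is treated as fixed.

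For (a), I would argue step by step through Algorithm \ref{scdsalgo}. Line 1 invokes APPROX-CDS on $G$, which by Theorem \ref{appcdom} is a polynomial-time procedure. Line 2 builds the induced subgraph $G' = G[V \setminus D_c]$, which is clearly polynomial. Line 3 invokes APPROX-DOM-SET on $G'$, which by Theorem \ref{appdom} is again polynomial-time. Line 4 is a single set union, and Line 5 a return; both take polynomial time. Hence APPROX-SCDS is a polynomial-time algorithm.

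For (b), I would simply quote Theorem \ref{appscdom}, which gives $|D_{sc}| \le (\Delta(G)+1)\,\gamma_{sc}(G)$. When $\Delta(G)$ is fixed (i.e., bounded by a constant independent of the input), the factor $\Delta(G)+1$ is a constant. Combining (a) and (b), APPROX-SCDS is a polynomial-time constant-factor approximation algorithm for MSCDS on instances with bounded maximum degree, which is precisely the condition for membership in APX.

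I do not anticipate any real obstacle here: the corollary is essentially a definitional consequence of Theorem \ref{appscdom}. The only thing worth being careful about is to state explicitly that ``$\Delta(G)$ fixed'' is what makes the ratio constant, since without that assumption the bound $\Delta(G)+1$ grows with the input and would not place the problem in APX.
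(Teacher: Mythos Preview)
Your proposal is correct and matches the paper's approach exactly: the paper simply states that the corollary follows immediately from Theorem~\ref{appscdom}, and your argument spells out the two definitional ingredients (polynomial running time of APPROX-SCDS and the constancy of the factor $\Delta(G)+1$ when $\Delta(G)$ is fixed) that make this immediate.
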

\subsection{Lower bound on approximation ratio}
To obtain a lower bound, we provide an approximation preserving reduction from the MINIMUM DOMINATION problem, which has the following lower bound.
\begin{theorem}\label{dsinapp}\cite{inapprox}
	For a graph $ G=(V,E) $, the MINIMUM DOMINATION problem cannot be approximated within $ (1-\epsilon) \ln n$ for any $ \epsilon >0 $ unless NP $ \subseteq $ DTIME$( n ^{O(\log\log n)})$, where $ n=\vert V \vert$.  
\end{theorem}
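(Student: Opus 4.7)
The plan is to derive this hardness by an approximation-preserving reduction from SET COVER, whose $(1-\epsilon)\ln n$ inapproximability under the same complexity assumption is Feige's classical result. Given a SET COVER instance $(X,\mathcal{F})$ with $|X|=n$ elements and $|\mathcal{F}|=m$ sets, I would build a graph $G=(V,E)$ by taking $V=X\cup\mathcal{F}$, adding an edge $\{x,S\}$ whenever $x\in S$, and turning $\mathcal{F}$ into a clique. The purpose of the clique is to guarantee that once any set vertex is picked, the remaining set vertices are dominated automatically, and more importantly, that any element vertex appearing in a candidate dominating set can be swapped out for a set vertex containing it at no cost.

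The key step is to show $\gamma(G)$ equals the optimum set cover size $\tau(X,\mathcal{F})$. First, every set cover $\mathcal{C}\subseteq\mathcal{F}$ is a dominating set of $G$: elements are covered by definition, and the remaining set vertices are dominated through the clique. Conversely, given any dominating set $D$ of $G$, any $x\in D\cap X$ can be replaced by a neighbor $S\in\mathcal{F}$ without increasing $|D|$ (such $S$ exists since $x$ lies in at least one set); after this cleanup, $D\subseteq\mathcal{F}$ dominates every element, so $D$ is a set cover. Hence any $\alpha$-approximation for MINIMUM DOMINATION on $G$ converts, in polynomial time, to an $\alpha$-approximation for SET COVER on $(X,\mathcal{F})$.

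To close the argument, one transfers the inapproximability factor. Writing $N=|V|=n+m$, I would use the fact that Feige's hard SET COVER instances have $m$ polynomially bounded in $n$, so $\ln N = (1+o(1))\ln n$. Consequently, a hypothetical $(1-\epsilon)\ln N$-approximation for MINIMUM DOMINATION would yield a $(1-\epsilon')\ln n$-approximation for SET COVER for some $\epsilon'>0$ on all sufficiently large instances, which contradicts Feige's theorem unless $\mathrm{NP}\subseteq\mathrm{DTIME}(n^{O(\log\log n)})$.

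The main obstacle is the bookkeeping in the last step: one must verify that the $o(1)$ slack from replacing $\ln n$ by $\ln N$ does not eat the constant $\epsilon$, and in particular that the reduction really is approximation-preserving in the strong sense required for an $(1-\epsilon)\ln n$-type lower bound (as opposed to a coarser constant-factor reduction). This is the only delicate point; the combinatorial equivalence $\gamma(G)=\tau(X,\mathcal{F})$ itself is routine.
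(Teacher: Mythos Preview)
The paper does not prove this theorem; it is quoted verbatim from \cite{inapprox} (Chleb\'ik and Chleb\'ikov\'a) and used as a black box in the subsequent inapproximability argument for MSCDS. So there is no ``paper's own proof'' to compare against.

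That said, your sketch is the standard derivation of this fact from Feige's Set Cover lower bound, and the combinatorial core (the equality $\gamma(G)=\tau(X,\mathcal{F})$ via the element--set incidence graph with a clique on the set side) is correct. You have also correctly identified the only genuine subtlety: controlling the ratio $\ln N/\ln n$ when $N=n+m$. If Feige's hard instances had $m=n^c$ for some fixed $c>1$, then $\ln N\sim c\ln n$ and you would lose a constant factor, not a $1+o(1)$ factor. The usual fix is either to invoke the version of Feige's construction in which $m$ is polynomial of fixed degree and absorb the constant into $\epsilon$ via a padding/amplification step, or simply to cite the domination lower bound directly (as this paper does). Your awareness of this bookkeeping issue is appropriate; just be aware that ``$m$ polynomial in $n$'' alone is not quite enough to conclude $\ln N=(1+o(1))\ln n$ without further argument.
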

\noindent The above result holds in bipartite and split graphs as well \cite{inapprox}.
\begin{theorem}\label{scdsinappthm}
	For a graph $ G=(V,E) $, the MSCDS problem cannot be approximated within $ (1-\epsilon) \ln \vert V \vert $ for any $ \epsilon >0 $ unless NP $ \subseteq $ DTIME$( \vert V \vert ^{O(\log\log \vert V \vert)}). $ 
\end{theorem}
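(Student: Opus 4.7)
The plan is to give an approximation--preserving (L-reduction style) reduction from MINIMUM DOMINATION on bipartite graphs to MSCDS, and then invoke Theorem~\ref{dsinapp}. Since Theorem~\ref{dsinapp} holds even when the input to MINIMUM DOMINATION is restricted to bipartite graphs, I start with a bipartite instance $G=(A\cup B,E)$ with $|A\cup B|=n$, and build $G'$ by essentially the same small gadget attachment used in the star convex bipartite NP-completeness proof: add four new vertices $a_x,a_y,b_x,b_y$ with $a_x$ adjacent to every vertex of $B\cup\{b_x,b_y\}$, $b_x$ adjacent to every vertex of $A\cup\{a_x,a_y\}$, and no other new edges. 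This construction is polynomial, preserves the bipartition, and gives $|V(G')|=n+4$.

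Next I would prove the two inequalities that make the reduction approximation-preserving: (i) $\gamma_{sc}(G')\le \gamma(G)+4$, by observing that for any dominating set $D$ of $G$ the set $D\cup\{a_x,a_y,b_x,b_y\}$ is a SCDS of $G'$ (connectivity is forced by the two universal-like vertices $a_x,b_x$, and the defender of any $u\notin S$ can be taken inside the gadget since the gadget vertices are adjacent to everything outside $\{a_y,b_y\}$); and (ii) from any SCDS $S$ of $G'$, Proposition~\ref{p1} forces $\{a_x,a_y,b_x,b_y\}\subseteq S$ (because $a_y,b_y$ are pendants with supports $a_x,b_x$), so $S^*=S\setminus\{a_x,a_y,b_x,b_y\}$ has size $|S|-4$ and, as argued in the star convex proof, dominates every vertex of $G$. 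Hence $\gamma(G)\le \gamma_{sc}(G')-4$ and the two parameters differ by exactly the additive constant $4$.

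With these two relations in hand, the reduction is approximation-preserving up to additive $O(1)$. Suppose, toward contradiction, that MSCDS admits an $(1-\epsilon)\ln|V(G')|$-approximation for some fixed $\epsilon>0$; running it on $G'$ and stripping the four gadget vertices yields a dominating set of $G$ of size at most $(1-\epsilon)\ln(n+4)\cdot(\gamma(G)+4)-4$. The main technical obstacle is absorbing the additive $+4$ into a slightly weaker approximation factor $(1-\epsilon')\ln n$ for MINIMUM DOMINATION on $G$. I would handle this the standard way: assume without loss of generality that $\gamma(G)$ is large (if $\gamma(G)=O(1)$ then MINIMUM DOMINATION is solvable in polynomial time on $G$ by brute force, so the hard instances guaranteed by Theorem~\ref{dsinapp} must have $\gamma(G)\to\infty$), which lets me bound
\[
\frac{(1-\epsilon)\ln(n+4)(\gamma(G)+4)-4}{\gamma(G)}\le (1-\epsilon')\ln n
\]
for any chosen $\epsilon'\in(0,\epsilon)$ once $n$ (and hence $\gamma(G)$) is sufficiently large, using $\ln(n+4)=\ln n+o(1)$ and $(\gamma(G)+4)/\gamma(G)\to 1$. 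This contradicts Theorem~\ref{dsinapp}, so no such approximation algorithm for MSCDS exists unless $\mathrm{NP}\subseteq\mathrm{DTIME}(|V|^{O(\log\log|V|)})$, proving the theorem.
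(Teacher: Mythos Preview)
Your overall strategy matches the paper's: reduce from MINIMUM DOMINATION via a small gadget so that $\gamma_{sc}(G')=\gamma(G)+c$ for a constant $c$, then absorb the additive constant into the approximation ratio by brute-forcing instances with small optimum. The paper actually uses a simpler two-vertex gadget on \emph{general} graphs (a universal vertex $w$ plus a pendant $z$, giving $c=2$), whereas you reuse the four-vertex bipartite gadget from the star convex bipartite proof (giving $c=4$); both routes work, and yours incidentally yields the bipartite inapproximability result at the same time.

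There is, however, a real error in your justification of step (i). You assert that the defender of any $u\notin S$ ``can be taken inside the gadget.'' It cannot: $a_x$ and $b_x$ are support vertices (of $b_y$ and $a_y$ respectively), and by Proposition~\ref{p1} no support vertex is an $S$-defender; concretely, swapping out $b_x$ would leave $a_y$ undominated. The correct defender for $u\in A\setminus S$ (resp.\ $u\in B\setminus S$) is a vertex $d\in D\cap N_G(u)$, which exists because $D$ dominates $G$ in the original graph. After swapping $d$ for $u$, the vertices $a_x,b_x$ still dominate all of $A\cup B$ and keep the set connected through the edge $(a_x,b_x)$, so $(S\setminus\{d\})\cup\{u\}$ is indeed a CDS. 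With this correction your proof goes through; the rest of the analysis (Proposition~\ref{p1} forcing the four gadget vertices into any SCDS, and the absorption of the additive $+4$ for large $\gamma(G)$) is fine and mirrors the paper's treatment.
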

\begin{proof}
	In order to prove the theorem, we propose the following approximation preserving reduction. Let $ G=(V,E) $, where $ V=\{v_1,v_2,\ldots,v_n\} $ be an instance of the MINIMUM DOMINATION problem. From this we construct an instance $ G^\prime=(V^\prime, E^\prime) $ of MSCDS, where $ V^\prime=V \cup \{w, z\} $, and $ E^\prime=E \cup \{(v_i,w): v_i \in V\}  \cup \{(w, z)\}$.  \par
	Let $ D^*$ be a minimum dominating set of a graph $ G $ and $ S^*$ be a minimum SCDS of a graph $ G^\prime.$ It can be observed from the reduction that by using any dominating set of $ G, $ a SCDS of $ G^\prime $ can be formed by adding $ w$ and  $z$ vertices to it. Hence $ \vert S^* \vert \le \vert D^* \vert + 2. $ \par
	Let algorithm $ A $ be a polynomial time approximation algorithm to solve the MSCDS problem on graph $ G^\prime $ with an approximation ratio $ \alpha=(1-\epsilon) \ln \vert V^\prime \vert $ for some fixed $ \epsilon>0. $ Let $ k $ be a fixed positive integer. Next, we propose the following algorithm, DOM-SET-APPROX to find a dominating set of a given graph $ G $.

\renewcommand{\algorithmicrequire}{\textbf{Input:}}
\renewcommand{\algorithmicensure}{\textbf{Output:}}
	\begin{algorithm}
		
		\caption{DOM-SET-APPROX($ G $)}
		\begin{algorithmic}[1]
			\Require{\mbox{A simple and undirected graph $G$ }}
			\Ensure{\mbox{A dominating set $ D $ of $G$.}}
			\If {there exists a dominating set $ D^\prime $ of size at most $ k $} \\
			{\hskip1.5em $ D  \gets D^\prime$} 
			\Else \\
			{\hskip1.5emConstruct the graph $ G^\prime $ \\
			\hskip1.5emCompute a SCDS $ S $ of $ G^\prime $ by using algorithm $ A $ \\
			\hskip1.5em$ D \gets S \cap V $} 
			\EndIf \\
			\Return $ D. $
		\end{algorithmic}
	\end{algorithm}
	The algorithm DOM-SET-APPROX runs in polynomial time. It can be noted that if $ D $ is a minimum dominating set of size at most $ k $, then it is optimal. Next, we analyze the case where $ D $ is not a minimum dominating set of size at most $ k.$ \par
	Let $ S^*$ be a minimum SCDS of $ G^\prime$, then $ \vert S^* \vert \ge k. $ Given a graph $ G $, DOM-SET-APPROX computes a dominating set of size $ \vert D \vert \le  \vert S \vert \le \alpha \vert S^* \vert \le \alpha (\vert D^* \vert + 2) = \alpha (1 + 2/\vert D^* \vert)\vert D^* \vert \le \alpha (1 + 2/k)\vert D^* \vert$. Therefore, DOM-SET-APPROX approximates a dominating set within a ratio $ \alpha (1 + 2/k).$ 
	If $ 2/k < \epsilon/2,$ then the approximation ratio $ \alpha (1 + 2/k) < (1-\epsilon) (1+\epsilon/2) \ln n= (1-\epsilon^\prime)\ln n$ where $\epsilon^\prime=\epsilon/2+\epsilon^2/2.$ \par
	By theorem \ref{dsinapp}, if the MINIMUM DOMINATION problem can be approximated within a ratio of $ (1-\epsilon^\prime)\ln n,$ then $ NP \subseteq DTIME(n^{O(\log  \log n)})$. Similarly, if the MSCDS problem can be approximated within a ratio of $ (1-\epsilon)\ln n,$ then $ NP \subseteq DTIME(n^{O(\log  \log n)})$. For large values of $ n $, $ \ln n \approxeq \ln(n+2) $, for a graph $ G^\prime=(V^\prime,E^\prime),$ where $ \vert V^\prime \vert = \vert V \vert +2,$ MSCDS problem cannot be approximated within a ratio of $ (1-\epsilon)\ln \vert V^\prime \vert$ unless $ NP \subseteq DTIME(\vert V ^\prime \vert^{O(\log \log \vert V^\prime \vert)} ). $
\end{proof}
\begin{theorem}
	For a bipartite graph $ G=(X,Y,E) $, the MSCDS problem cannot be approximated within $ (1-\epsilon) \ln n $ for any $ \epsilon >0 $ unless NP $ \subseteq $ DTIME$( n^{O(\log\log n)})$, where $ n=\vert X \cup Y \vert $.  
\end{theorem}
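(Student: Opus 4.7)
The plan is to follow the same template as Theorem~\ref{scdsinappthm} but with a bipartite-preserving variant of the reduction, starting this time from MINIMUM DOMINATION restricted to bipartite graphs (the inapproximability in Theorem~\ref{dsinapp} holds in this setting as well). The main obstacle is that the earlier reduction uses a single vertex $w$ adjacent to every vertex of $V$, which destroys bipartiteness; I will handle this by splitting $w$ into two ``half-universal'' vertices, one on each side of the bipartition, and anchor each of them with a pendant.

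Given a bipartite instance $G=(X,Y,E)$ of MINIMUM DOMINATION, I would build $G'=(X',Y',E')$ with $X'=X\cup\{u_Y,p_X\}$ and $Y'=Y\cup\{u_X,p_Y\}$, adding edges so that $u_X$ is adjacent to every vertex of $X$, $u_Y$ is adjacent to every vertex of $Y$, together with $(u_X,u_Y)$, the pendant edge $(u_X,p_X)$, and the pendant edge $(u_Y,p_Y)$. The graph $G'$ is bipartite by construction with the stated partition, is produced in polynomial time, and satisfies $|V(G')|=|V(G)|+4$.

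The equivalence of optima splits into two directions. (i) If $D$ dominates $G$, then $S=D\cup\{u_X,u_Y,p_X,p_Y\}$ is a SCDS of $G'$: the edge $(u_X,u_Y)$ together with the adjacencies of $u_X$ to $D\cap X$ and of $u_Y$ to $D\cap Y$ makes $G'[S]$ connected; domination is automatic because $u_X,u_Y$ between them cover $V'$; and for each outside vertex $u\in V\setminus D$, a $G$-neighbor in $D$ serves as defender while $u_X,u_Y,p_X,p_Y$ stay put and preserve connectivity. Hence $\gamma_{sc}(G')\le\gamma(G)+4$. (ii) Conversely, Proposition~\ref{p1} forces $\{u_X,u_Y,p_X,p_Y\}\subseteq S^*$ for every SCDS $S^*$; the decisive observation is that $u_X$ cannot defend any vertex $u\in X\setminus S^*$, since removing $u_X$ would isolate the pendant $p_X$, and symmetrically $u_Y$ cannot defend any vertex in $Y\setminus S^*$. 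Since $p_X,p_Y$ are not adjacent to any vertex of $V$ either, every defender must lie inside $S^*\cap V$, so $D:=S^*\cap V$ is a dominating set of $G$ of size exactly $|S^*|-4$.

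With this gap-preserving reduction in hand, the rest mirrors Theorem~\ref{scdsinappthm} exactly. Assuming an $\alpha(n')=(1-\epsilon)\ln n'$ approximation for MSCDS on bipartite graphs, combine it with a brute-force check for dominating sets of size at most a sufficiently large constant $k$; the extracted set $D=S\cap V$ has size at most $\alpha(|V|+4)(1+4/|D^*|)\,|D^*|$, and choosing $k$ so that $(1+4/k)(1-\epsilon)<1-\epsilon'$ while using $\ln(|V|+4)\sim\ln|V|$ yields a $(1-\epsilon')\ln|V|$ approximation for MINIMUM DOMINATION on bipartite graphs, contradicting Theorem~\ref{dsinapp} unless $NP\subseteq DTIME(n^{O(\log\log n)})$. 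The anticipated delicate step is precisely the defender analysis in direction (ii), and the pendants $p_X,p_Y$ are introduced expressly to force it through.
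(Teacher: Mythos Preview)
Your proposal is correct and is essentially the same as the paper's own argument: the paper adds four vertices $w_1,z_1$ (each universal to one side, joined by an edge) and pendants $w_2,z_2$, which is exactly your $u_Y,u_X,p_Y,p_X$ up to renaming, then appeals to Theorem~\ref{scdsinappthm} for the remaining calculation. If anything, your write-up is more explicit than the paper's in justifying the converse direction (the defender analysis via Proposition~\ref{p1}), where the paper simply says ``the rest of the proof is similar.''
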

\begin{proof}
	In order to prove the theorem, we propose the following approximation preserving reduction. Consider $ G=(X,Y,E) $, where $ X=\{x_1,x_2,\ldots,x_p\} $ and $Y=\{y_1,y_2,\ldots,y_q\}$ be an instance of the MINIMUM DOMINATION problem. From this we construct an instance $ G^\prime=(X^\prime, Y^\prime,E^\prime) $ of MSCDS, where $ X^\prime=X \cup \{w_1, z_2\} $, $ Y^\prime=Y \cup \{z_1, w_2\} $ and $ E^\prime=E \cup \{(x_i,z_1): x_i \in X\}  \cup \{(y_i,w_1): y_i \in Y\} \cup \{(w_1,w_2), (z_1,z_2), (w_1,z_1)\}$. An example construction of graph $ G^\prime$ from a bipartite graph $ G=(X, Y, E) $ with $ X=\{x_1,x_2,x_3,x_4,x_5\},$ and $Y=\{y_1,y_2,y_3,y_4,y_5\} $  is illustrated in Figure \ref{bipartiteinappr}. \par
Let $ D^*$ be a minimum dominating set of a graph $ G $ and $ S^*$ be a minimum SCDS of a graph $ G^\prime.$ It can be observed from the reduction that by using any dominating set of $ G, $ a SCDS of $ G^\prime $ can be formed by adding $ \{w_1, w_2, z_1, z_2\} $ vertices to it. Hence, $ \vert S^* \vert \le \vert D^* \vert + 4. $ The rest of the proof is similar to the proof of theorem \ref{scdsinappthm}.
	\begin{figure}
		\begin{center}
			\begin{tikzpicture}[scale=1.1]
			\node at (0,0){\textbullet}; \node at (0.2,-0.2){$ x_5 $};
			\node at (0,1){\textbullet}; \node at (0,0.8){$ x_4 $};
			\node at (0,2){\textbullet}; \node at (0,1.8){$ x_3 $};
			\node at (0,3){\textbullet}; \node at (0,2.7){$ x_2 $};
			\node at (0,4){\textbullet}; \node at (0.1,3.7){$ x_1 $};
			
			\node at (3.2,0){\textbullet}; \node at (3.2,-0.2){$ y_5 $};
			\node at (3.2,1){\textbullet}; \node at (3.2,0.8){$ y_4 $};
			\node at (3.2,2){\textbullet}; \node at (3.2,1.8){$ y_3 $};
			\node at (3.2,3){\textbullet}; \node at (3.2,2.8){$ y_2 $};
			\node at (3.2,4){\textbullet}; \node at (3.2,3.7){$ y_1 $};
			
			\draw (0,4)--(3.2,4)--(0,3)--(3.2,3)--(0,4);
			\draw (0,4)--(3.2,4)--(0,3)--(3.2,3)--(0,4);
			\draw (0,2)--(3.2,1)--(0,1);
			\draw (0,2)--(3.2,3);
			\draw (0,2)--(3.2,0);
			\draw (3.2,2)--(0,3)--(3.2,1);
			\draw (0,0)--(3.2,0);
			
			\node at (-1,2){\textbullet};\node at (-1.3,1.8){$ z_1 $};
			\node at (4.2,2){\textbullet};\node at (4.45,1.8){$ w_1 $};
			
			\draw[dashed,thick] (0,0)--(-1,2);
			\draw[dashed,thick] (0,1)--(-1,2);
			\draw[dashed,thick] (0,2)--(-1,2);
			\draw[dashed,thick] (0,3)--(-1,2);
			\draw[dashed,thick] (0,4)--(-1,2);
			
			\draw[dashed,thick] (3.2,0)--(4.2,2);												
			\draw[dashed,thick] (3.2,1)--(4.2,2);												
			\draw[dashed,thick] (3.2,2)--(4.2,2);												
			\draw[dashed,thick] (3.2,3)--(4.2,2);												
			\draw[dashed,thick] (3.2,4)--(4.2,2);

			\node at (-2.1,2){\textbullet};\node at (-2,1.8){$ z_2 $};
			\draw[dashed,thick] (-1,2)--(-2.1,2);
			
			\node at (5.3,2){\textbullet};\node at (5.4,1.8){$ w_2 $};
			\draw[dashed,thick]	 (4.2,2)--(5.3,2);
			
			\draw[dashed,thick] (-1,2)..controls (-0.5,4.6)..(1.6,4.6);
			\draw[dashed,thick]   (1.6,4.6)..controls (3.7,4.6)..(4.2,2);
			\end{tikzpicture}
			\caption{Example construction of a graph $G^\prime$}\label{bipartiteinappr}			
		\end{center}
	\end{figure}
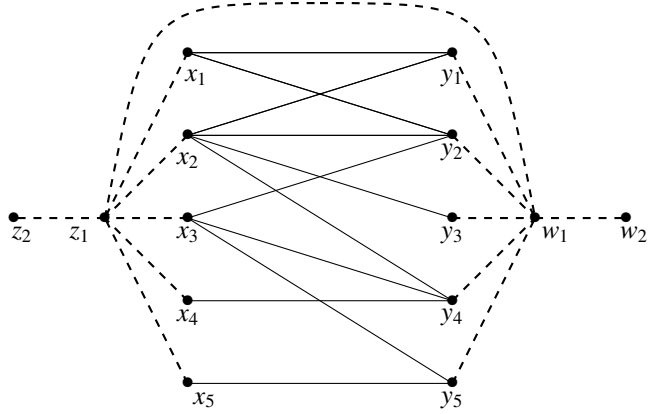
\end{proof}
\subsection{APX-completeness}
In this subsection, we prove that the MSCDS problem is APX-complete for graphs with maximum degree $ 4 $. This can be proved using an L-reduction, which is defined as follows.
\begin{definition}\textbf{(L-reduction)}
	Given two NP optimization problems $ F $ and $ G $ and a polynomial time transformation $ f $
	from instances of $ F $ to instances of $ G $, one can say that $ f $ is an \textit{L-reduction} if there exists positive constants $ \alpha $ and $ \beta $ such that for every instance $ x $ of $ F $ the following conditions are satisfied.
	\begin{enumerate}
		\item $ opt_G(f(x)) \le \alpha . opt_F(x) $.
		\item for every feasible solution $ y $ of $ f(x) $ with objective value $ m_G(f(x), y) = c_2 $ in polynomial time one can find a solution $ y^\prime $ of $ x $ with $ m_F (x, y^\prime) = c_1 $ such that $\vert  opt_F (x) - c_1 \vert \le  \beta \vert opt_G(f(x)) - c_2\vert. $
	\end{enumerate}
	Here, $ opt_F(x)$ represents the size of an optimal solution for an instance $ x $ of an NP optimization problem $ F$.  
\end{definition}
\noindent		An optimization problem $\pi $ is APX-complete if:
\begin{enumerate}
	\item $ \pi \in$ APX, and
	\item $ \pi \in$ APX-hard, i.e., there exists an L-reduction from some known APX-complete problem to $ \pi $.
\end{enumerate}
By theorem \ref{appscdom}, it is known that the MSCDS problem can be approximated within a constant factor for graphs with maximum degree $ 4 $. Thus, this problem is in APX for graphs with maximum degree $ 4 $. To show APX-hardness of MSCDS, we give an L-reduction from MINIMUM DOMINATING SET problem in graphs with maximum degree $ 3$ (DOM-$ 3 $) which has been proved as APX-complete \cite{dom3apx}.
\begin{theorem}
	The MSCDS problem is APX-complete for graphs with maximum degree $ 4.$
\end{theorem}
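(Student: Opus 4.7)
Membership in APX is immediate: for a graph with $\Delta(G)=4$, Theorem \ref{appscdom} yields a polynomial-time $(\Delta(G)+1)=5$-approximation algorithm for MSCDS, which is a constant, so MSCDS on graphs with $\Delta(G)=4$ belongs to APX. The substantive work is therefore to establish APX-hardness by exhibiting an L-reduction from DOM-$3$ (minimum dominating set on graphs of maximum degree $3$), together with positive constants $\alpha,\beta$ satisfying the two L-reduction conditions from the definition above.

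My plan is as follows. Given an instance $G=(V,E)$ of DOM-$3$ with $V=\{v_1,\ldots,v_n\}$, I would build $G'=(V',E')$ of maximum degree $4$ by attaching a small local \emph{gadget} $H_i$ to every $v_i\in V$ that uses exactly one new edge incident on $v_i$ (so $\deg_{G'}(v_i)\le\deg_G(v_i)+1\le 4$) and whose internal vertices have degree at most $4$. The gadget will be designed so that (i) by Proposition \ref{p1} it contributes a fixed number, say $c$, of \emph{forced} vertices (pendants and their supports) to every SCDS of $G'$, and (ii) the obligations of connectedness and the security condition on vertices of the gadget cannot be satisfied without at least one representative of the gadget lying in the neighbourhood of $v_i$ in $G$. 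Concretely, I want the gadget to realise the equivalence ``$v_i$ is dominated in $G$ by $D\subseteq V$'' $\Longleftrightarrow$ ``the corresponding portion of any SCDS of $G'$ dominates and defends the gadget of $v_i$'', so that after stripping off the forced $cn$ vertices one is left with a dominating set of $G$.

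With such a gadget in hand, the forward direction is easy: for any dominating set $D$ of $G$, the set $D\cup\bigl(\bigcup_i F_i\bigr)$, where $F_i\subseteq V(H_i)$ is the fixed collection of forced gadget vertices, is a SCDS of $G'$ of size $|D|+cn$; hence $\mathrm{opt}_{\text{SCDS}}(G')\le \mathrm{opt}_{\text{DOM}}(G)+cn$. For the backward direction, given a SCDS $S'$ of $G'$ I would (a) show by the gadget analysis that $|S'\cap V(H_i)|\ge c$ for every $i$, and (b) read off a dominating set $D$ of $G$ of size at most $|S'|-cn$. The L-reduction then comes together because DOM-$3$ has $\mathrm{opt}_{\text{DOM}}(G)\ge n/4$ (every vertex of a dominating set in a graph of maximum degree $3$ covers at most $4$ vertices), so
\[
\mathrm{opt}_{\text{SCDS}}(G')\le \mathrm{opt}_{\text{DOM}}(G)+cn \le (1+4c)\,\mathrm{opt}_{\text{DOM}}(G),
\]
giving condition~$1$ with $\alpha=1+4c$, and
\[
|\mathrm{opt}_{\text{DOM}}(G)-|D||\le |\mathrm{opt}_{\text{SCDS}}(G')-|S'||,
\]
giving condition~$2$ with $\beta=1$.

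The main obstacle is step (ii): designing a gadget that respects the degree cap of $4$ yet is rigid enough to force the intended correspondence. The temptation to attach a single pendant or a ``pendant-with-support'' path to every $v_i$ collapses the reduction, because Proposition \ref{p1} combined with the connectedness requirement of the SCDS then drags the whole of $V(G)$ into every SCDS and washes out the dependence on $\mathrm{opt}_{\text{DOM}}(G)$. The gadget must instead be designed so that the forced portion is self-dominating and self-connecting \emph{without} requiring $v_i$ to lie in $S$, while still charging $S$ an extra vertex in $N_G[v_i]$ exactly when $v_i$ is not dominated by the chosen subset of $V$. Once such a gadget is in place and the counting of forced vertices is verified, both L-reduction inequalities, and hence APX-completeness of MSCDS for graphs with $\Delta(G)=4$, will follow routinely.
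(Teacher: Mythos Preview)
Your L-reduction scaffolding is exactly right and matches the paper: reduce from DOM-$3$, use $\gamma(G)\ge n/4$ to get $\alpha=1+4c$, and read off $\beta=1$ from $|D|-|D^*|\le |S|-|S^*|$. The gap is in the gadget design. You insist on a \emph{local} gadget $H_i$ attached to $v_i$ by a single edge, and you ask that its forced portion be ``self-connecting without requiring $v_i$ to lie in $S$''. But if the $H_i$ are pairwise disjoint and each meets the rest of $G'$ only at $v_i$, then every $v_i$ is a cut vertex of $G'$ (it separates $H_i$ from everything else), and every connected dominating set---hence every SCDS---must contain every cut vertex. So all of $V$ is forced into $S$, and the reduction collapses in precisely the way you warned about for the naive pendant construction. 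No amount of internal cleverness in a purely local $H_i$ can escape this: the obstruction is global connectivity, not local structure.

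The paper's fix is not a fancier local gadget but a \emph{global backbone}. It uses the very pendant-plus-support path you rejected---attach $x_i$ to $v_i$ and hang a pendant $y_i$ off $x_i$---and then adds the edges $x_ix_{i+1}$ for $1\le i\le n-1$, stringing all the supports into a path. Now each $x_i$ has degree at most $4$ (neighbours $v_i,y_i,x_{i-1},x_{i+1}$), each $v_i$ still gains only one edge, and crucially no $v_i$ is a cut vertex any more, since $H_i$ reaches the rest of $G'$ through the $x$-path. Proposition~\ref{p1} forces $X\cup Y\subseteq S$ (so $c=2$) and forbids any $x_i$ or $y_i$ from defending; hence any $v_i\notin S$ must be defended by some $v_j\in S\cap N_G(v_i)$, which gives that $S\cap V$ is a dominating set of $G$. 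The exact equality $|S^*|=|D^*|+2n$ then yields $\alpha=9$ and $\beta=1$. Once you add this backbone idea, the rest of your outline goes through verbatim.
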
 
\begin{proof}
	It is known that MSCDS is in APX. Given an instance $ G=(V,E)$ of DOM-$3$, where $ V=\{v_1,v_2,\ldots,v_n\} $, we construct an instance $ G^\prime=(V^\prime, E^\prime) $ of MSCDS as follows. Let $ X=\{x_1,x_2,\ldots,x_n\} $ and $ Y=\{y_1,y_2,\ldots,y_n\} $. In graph $ G^\prime,$ $ V^\prime= V \cup X \cup Y$ and $ E^\prime=E \cup \{(v_i,x_i),(x_i,y_i) : 1 \le i \le n\} \cup \{(x_i,x_{i+1}) : 1 \le i \le n-1\}.$ Note that $ G^\prime $ is a graph with maximum degree $ 4 $. An example construction of a graph $ G^\prime$ from a graph $ G$ is shown in Figure \ref{fig:APX}.  
	\begin{figure}
		\begin{center}
			\begin{tikzpicture}[scale=1.0]
			\draw[dotted] (1.25,2.9) ellipse (1.9cm and 2cm);
			\node at (1.2,4.5){$ G $};
			
			\node at (0,2){\textbullet}; \node at (0,1.7){$ v_2 $};			\node at (-1.5,2){\textbullet}; \node at (-1.6,1.7){$ x_2 $};
			\node at (-3,2){\textbullet}; \node at (-2.9,1.7){$ y_2 $};
			
			\node at (0,4){\textbullet}; \node at (-0.2,3.7){$ v_1 $};		
			\node at (-1.5,4){\textbullet}; \node at (-1.2,3.7){$ x_1 $};
			\node at (-3,4){\textbullet}; \node at (-3,3.7){$ y_1 $};

			\node at (4,2){\textbullet}; \node at (4,1.7){$ x_3 $};
			\node at (5.5,2){\textbullet}; \node at (5.5,1.7){$ y_3 $};
			\node at (2.5,2){\textbullet}; \node at (2.5,1.7){$ v_3 $};
			
			\node at (4,4){\textbullet}; \node at (4.3,3.7){$ x_4 $};
			\node at (5.5,4){\textbullet}; \node at (5.5,3.7){$ y_4 $};
			\node at (2.5,4){\textbullet}; \node at (2.69,3.7){$ v_4 $};
			
			\draw (2.5,4)--(2.5,2)--(0,2);
			\draw (2.5,4)--(0,4)--(0,2);
			\draw (0,2)--(2.5,4);

			\draw (0,4)--(-1.5,4)--(-3,4);
			\draw (0,2)--(-1.5,2)--(-3,2);
			\draw (2.5,2)--(4,2)--(5.5,2);
			\draw (2.5,4)--(4,4)--(5.5,4);
			
			\draw (-1.5,4)--(-1.5,2);
			\draw (4,2)--(4,4);
			\draw (-1.5,2)..controls (1.2,0.3)..(4,2);
			\end{tikzpicture}
			\caption{Construction of $ G^\prime$ from $ G $}
			\label{fig:APX}
		\end{center}
	\end{figure}
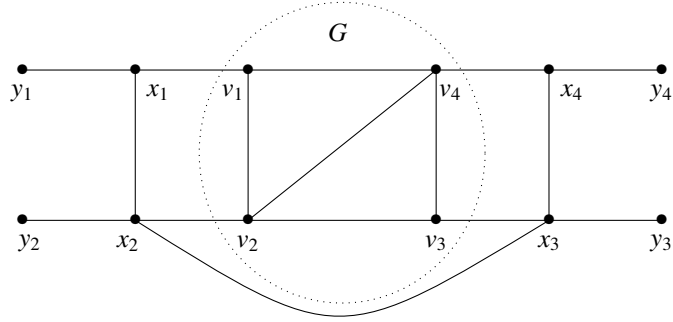
%
	\begin{claim}
		If $ D^*$ is a minimum dominating set of $ G$ and $ S^*$ is a minimum SCDS of $ G^\prime$ then $ \vert S^* \vert = \vert D^* \vert + 2n,$ where $ n=\vert V \vert.$
	\end{claim}
	\begin{proof}[Proof of claim]
		Suppose $ D^*$ is a minimum dominating set of $ G$, then $ D^* \cup X \cup Y $ is a SCDS of $ G^\prime.$ Further, if $ S^*$ is a minimum SCDS of $ G^\prime $, then it is clear that $ \vert S^* \vert \le \vert D ^* \vert + 2n.$ \par
		Next, we show that $ \vert S^* \vert \ge \vert D ^* \vert + 2n.$ Let $ S $ be any SCDS of $ G^\prime$. From Proposition \ref{p1}, it is clear that $ X \cup Y \subset S$, and no vertex $w \in X \cup Y $ is an $ S$-\textit{defender}. Therefore, for every vertex $ u \notin S$ there exists a vertex $ v \in S \cap V $ such that $ (S\setminus \{v\}) \cup \{u\} $ is a CDS of $ G^\prime.$ Hence $D=S \cap V $ is a dominating set of $ G$ and $\vert D \vert \ge \vert D^* \vert$ which implies $ \vert S \vert \ge \vert D^* \vert+2n.$ Since  $\vert S \vert \ge \vert S^* \vert$,  it is clear that $ \vert S^* \vert \ge \vert D ^* \vert + 2n.$ 
	\end{proof}
	Let $ D^*$ and $ S^* $ be a minimum dominating set and minimum SCDS of $ G $ and $ G^\prime$ respectively. It is known that for any graph $H$ with maximum degree $ \Delta(H) $, $ \gamma(H) \ge \frac{n}{\Delta(H)+1}$, where $ n=\vert V(H) \vert.$ Thus, $ \vert D^* \vert \ge \frac{n}{4}.$ From above claim it is evident that, $ \vert S^* \vert = \vert D ^* \vert + 2n \le \vert D ^* \vert+8\vert D ^* \vert = 9 \vert D ^* \vert.$ \par
	Now, consider a SCDS $ S $ of $ G^\prime$. Clearly, the set $ D=S \cap V$ is a dominating set of $ G.$ Therefore, $ \vert D \vert  \le \vert S \vert - 2n.$ Hence,  $ \vert D \vert-\vert D^* \vert  \le \vert S \vert - 2n-\vert D^*\vert=\vert S \vert -\vert S^* \vert.$ This proves that there is an L-reduction with $ \alpha=9 $ and $ \beta=1.$
\end{proof}
\section{Complexity difference in domination and secure connected domination}
Although secure connected domination is one of the several variants of domination problem, however they differ in computational complexity. In particular, there exist graph classes for which the first problem is polynomial-time solvable whereas the second problem is NP-complete and vice versa. Similar study has been made between domination and other domination parameters in \cite{henning, panda1,panda2}. \par

The DOMINATION problem is linear time solvable for doubly chordal graphs \cite{brandast}, but the SCDM problem is NP-complete for this class of graphs which is proved in section \ref{complexity}. Now, we construct a class of graphs in which the MSCDS problem can be solved trivially, whereas the DOMINATION problem is NP-complete.
\begin{definition}\textbf{(GC graph)}
	A graph is \textit{GC graph} if it can be constructed from a connected graph $ G=(V,E)$ where $ \vert V \vert=n,$ in the following way:\\[5pt]
		1. Create $ n $ complete graphs each with $ 3 $ vertices, such that $ i^{th} $ complete graph contains vertices $ \{a_i, b_i,c_i\} $.\\[3pt]
		2. Create $ n $ vertices, $ \{x_1,x_2,\ldots,x_n\}. $\\[3pt]
		3.  Add edges $ \{(x_i,v_i),(x_i,a_i): v_i \in V\}. $ 
\end{definition} 
\begin{figure}
	\begin{center}
		\begin{tikzpicture}[scale=1.0]
								\draw[dotted] (1.2,2.9) ellipse (1.8cm and 2cm);

		\node at (0,2){\textbullet}; \node at (0,1.8){$ v_2 $};			\node at (-1.5,2){\textbullet}; \node at (-1.5,1.8){$ x_2 $};
																		\node at (-3,2){\textbullet}; \node at (-2.9,1.8){$ a_2 $};
																		\node at (-4,2.5){\textbullet}; \node at (-4,2.75){$ b_2 $};
																		\node at (-4,1.5){\textbullet}; \node at (-4,1.3){$ c_2 $};

		\node at (0,4){\textbullet}; \node at (-0.2,3.8){$ v_1 $};			\node at (-1.5,4){\textbullet}; \node at (-1.5,3.8){$ x_1 $};
																		\node at (-3,4){\textbullet}; \node at (-3,3.8){$ a_1 $};
																		\node at (-4,4.5){\textbullet}; \node at (-4,4.7){$ b_1 $};
																		\node at (-4,3.5){\textbullet}; \node at (-4,3.3){$ c_1 $};

		\node at (4,2){\textbullet}; \node at (4,1.8){$ x_3 $};
																		\node at (5.5,2){\textbullet}; \node at (5.5,1.8){$ a_3 $};
																		\node at (6.5,2.5){\textbullet}; \node at (6.8,2.4){$ b_3 $};
																		\node at (6.5,1.5){\textbullet}; \node at (6.8,1.4){$ c_3 $};
																		\node at (2.5,2){\textbullet}; \node at (2.5,1.8){$ v_3 $};

		\node at (4,4){\textbullet}; \node at (4,3.8){$ x_4 $};
																		\node at (5.5,4){\textbullet}; \node at (5.5,3.8){$ a_4 $};
																		\node at (6.5,3.5){\textbullet}; \node at (6.8,3.4){$ c_4 $};
																		\node at (6.5,4.5){\textbullet}; \node at (6.8,4.4){$ b_4 $};
																		\node at (2.5,4){\textbullet}; \node at (2.5,3.8){$ v_4 $};
		
		\draw (0,4)--(2.5,2)--(0,2);
		\draw (2.5,4)--(0,4)--(0,2);
		\draw (0,2)--(2.5,4);

		\draw (0,4)--(-1.5,4)--(-3,4)--(-4,4.5);
		\draw (-3,4)--(-4,3.5)--(-4,4.5);
		
		\draw (0,2)--(-1.5,2)--(-3,2)--(-4,2.5)--(-4,1.5)--(-3,2);
		\draw (2.5,2)--(4,2)--(5.5,2)--(6.5,2.5)--(6.5,1.5)--(5.5,2);
		\draw (2.5,4)--(4,4)--(5.5,4)--(6.5,4.5)--(6.5,3.5)--(5.5,4);		

		\node at (1.2,4.6){$ G $};		
		\end{tikzpicture}
		\caption{GC graph construction}
		\label{gcgraph}
	\end{center}
\end{figure}
\begin{theorem}
	If $ G^\prime$ is a GC graph obtained from a graph $ G=(V,E) $ $ (\vert V \vert=n)$, then $ \gamma_{sc}(G^\prime)=4n$. 
\end{theorem}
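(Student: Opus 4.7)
The plan is to establish the equality $\gamma_{sc}(G') = 4n$ by proving matching upper and lower bounds. For the upper bound I would exhibit the explicit set $S = V \cup X \cup \{a_i : 1 \le i \le n\} \cup \{b_i : 1 \le i \le n\}$ of size exactly $4n$ and verify that it is a SCDS. Connectivity of $G'[S]$ follows from the chain $b_i$--$a_i$--$x_i$--$v_i$ together with the connectedness of $G$ on $V$. Domination is clear since the only vertices outside $S$ are the $c_i$'s, each adjacent to both $a_i$ and $b_i$ in $S$. For the secure property, each $c_i \notin S$ is defended by $b_i$: swapping yields $(S \setminus \{b_i\}) \cup \{c_i\}$, which still dominates ($b_i$ is covered by $a_i$ and $c_i$) and remains connected through $c_i$--$a_i$.

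For the lower bound, let $S$ be any SCDS of $G'$. The strategy is to show that for every index $i$ the set $S$ must contain the four vertices $a_i$, $x_i$, $v_i$, and at least one of $\{b_i, c_i\}$, giving $|S| \ge 4n$. The structural observation driving the proof is that each ``appendage'' $\{a_i, b_i, c_i, x_i\}$ is attached to $V$ by a single edge $x_i v_i$: vertices $b_i, c_i$ have no neighbors outside the triangle, $a_i$'s only external neighbor is $x_i$, and $x_i$'s only neighbors are $a_i$ and $v_i$.

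First I would force $a_i \in S$ and $|S \cap \{b_i, c_i\}| \ge 1$. Since $N(b_i) = \{a_i, c_i\}$, domination of $b_i$ forces $a_i \in S$ or $c_i \in S$ (similarly for $c_i$); and if $a_i \notin S$ then any triangle vertex in $S$ is severed from the rest of $S$, so $a_i \in S$. If in addition $b_i, c_i \notin S$, then the unique $S$-neighbor of $b_i$ is $a_i$, and defending $b_i$ forces swapping out $a_i$, leaving $b_i$ isolated in the resulting set — contradiction. Next I would argue $x_i \in S$: the only edge leaving the triangle is $a_i x_i$, so if $x_i \notin S$ the triangle chunk in $S$ becomes a component disconnected from the rest.

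Finally, the main obstacle is forcing $v_i \in S$, which I would handle by ruling out every potential defender. If $v_i \notin S$, the candidates in $N(v_i) \cap S$ are $x_i$ and the $G$-neighbors of $v_i$ lying in $S$. Swapping out $x_i$ for $v_i$ strips $a_i$ of its unique non-triangle $S$-neighbor, disconnecting the triangle chunk at $i$ from the rest; swapping out some $v_j \in N_G(v_i) \cap S$ breaks the chain at $j$, because $x_j \in S$ has $v_j$ as its only non-$a_j$ neighbor, so the chunk $\{x_j, a_j\} \cup (\{b_j, c_j\} \cap S)$ is disconnected from the rest of the swapped set. Neither swap yields a CDS, so $v_i \in S$. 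Combining the four forced inclusions over all $i$ gives $|S| \ge 4n$, matching the upper bound; the delicate step is the defender analysis for $v_i$, which uses the narrowness of each appendage to rule out both types of swaps simultaneously.
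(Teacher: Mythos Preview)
Your proof is correct and follows essentially the same strategy as the paper: the same set $S=V\cup X\cup\{a_i,b_i:1\le i\le n\}$ for the upper bound, and the same four forced inclusions per index for the lower bound. The only difference is in packaging. The paper observes once that $v_i,x_i,a_i$ are all cut-vertices of $G'$ and invokes the standard fact that every connected dominating set contains every cut-vertex, which immediately gives $|S|\ge 3n$; then a single defender argument (a cut-vertex can never be an $S$-defender) forces one of $b_i,c_i$ into $S$. You instead argue each inclusion separately and, for $v_i$, run a full defender case analysis. That analysis is valid, but it is more work than needed: the very connectivity argument you used for $x_i$ (``the appendage is severed from the rest if the vertex is absent'') applies verbatim to $v_i$, since $v_i$ is the unique vertex joining the $i$-th appendage to the remainder of $G'$. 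Recognizing the cut-vertex property up front would let you skip the defender analysis for $v_i$ entirely.
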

\begin{proof}
	Let $ G^\prime=(V^\prime, E^\prime) $ be a GC graph. An example construction of GC graph is illustrated in Figure \ref{gcgraph}. Let $ S=V \cup \{x_1,x_2,\ldots,x_n\}\cup \{a_i,b_i : 1\le i\le n\}.$ It can be observed that $ S $ is a SCDS of $ G^\prime$ of size $ 4n$ and hence $ \gamma_{sc}(G^\prime) \le 4n.$ \par
	Let $ S $ be any SCDS in $ G^\prime $. It is known that every SCDS of a graph $ G $ is also a CDS of $ G $ and every CDS should contain all the cut-vertices of $ G $. Thus, it can be easily observed that for $ 1\le i \le n $, the vertices $ v_i, x_i$ and $a_i$ are cut-vertices in $ G^\prime$ and these vertices should be included in every SCDS of $ G^\prime $. Therefore, $ \vert S \vert \ge 3n. $ It can also be noted that these cannot defend any other vertex in $ G^\prime.$ Therefore, either $ b_i $ or $ c_i $, for each $ i $, where $ 1\le i\le n$ should be included in every SCDS of $ G^\prime$, and hence, $ \vert S \vert \ge 4n.$
\end{proof}
\begin{lemma}\label{difflemma}
	Let $ G^\prime$ be a GC graph constructed from a graph $ G=(V,E).$ Then $ G $ has a dominating set of size at most $ k $ if and only if $ G^\prime$ has a dominating set of size at most $ k+n.$
\end{lemma}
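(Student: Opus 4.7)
The plan is to prove the two directions of the equivalence by explicit constructions, exploiting the rigid local structure of the triangles $\{a_i,b_i,c_i\}$ and of the pendant-like attachments via $x_i$.

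For the forward direction, suppose $D$ is a dominating set of $G$ with $|D|\le k$. I will claim that $D' = D \cup \{a_1, a_2, \ldots, a_n\}$ is a dominating set of $G'$ of size at most $k+n$. The verification is routine: vertices of $V$ are dominated because $D$ dominates $G$ (and $V$ induces the same neighbourhoods in $G'$ as in $G$ apart from the extra edge to $x_i$); each $x_i$ is dominated by $a_i$; and within the $i$-th triangle, $a_i$ dominates $\{a_i,b_i,c_i\}$.

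For the reverse direction, suppose $D'$ is a dominating set of $G'$ with $|D'|\le k+n$. The crucial structural observation is that $N_{G'}[b_i]=N_{G'}[c_i]=\{a_i,b_i,c_i\}$, so in order to dominate $b_i$ (equivalently $c_i$), at least one vertex of each triangle must lie in $D'$. Summing over $i$, we get $|D'\cap \bigcup_{i=1}^n\{a_i,b_i,c_i\}|\ge n$, so $|D'\cap(V\cup X)|\le k$, where $X=\{x_1,\ldots,x_n\}$. I then construct
\[
D \;=\; (D'\cap V)\;\cup\;\{v_i : x_i \in D'\cap X\}.
\]
The size bound $|D|\le|D'\cap V|+|D'\cap X|\le k$ is immediate. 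To verify that $D$ dominates $G$, pick any $v_j\in V$; since $D'$ dominates $G'$, some $w\in D'$ lies in $N_{G'}[v_j] = N_G[v_j]\cup\{x_j\}$. If $w\in V$ then $w\in D'\cap V\subseteq D$, placing $v_j$ or a $G$-neighbour of $v_j$ in $D$; if $w=x_j$, then by construction $v_j\in D$. Either way $v_j$ is dominated in $G$.

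The main obstacle, if any, is bookkeeping in the reverse direction: one must be careful that the triangle vertices in $D'$ do not contribute to dominating vertices of $V$ (they cannot, since the only external neighbour of a triangle is $a_i$'s link to $x_i$, not to $V$), and that replacing $x_i\in D'$ by $v_i$ in $D$ correctly transfers the domination of $v_j\in V$ to a vertex of $N_G[v_j]$. Both points follow immediately from the very local way the gadget is attached, which is why the proof reduces to a clean case analysis on the unique vertex $w\in D'$ that dominates $v_j$ in $G'$.
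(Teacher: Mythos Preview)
Your proof is correct and follows essentially the same approach as the paper: the forward direction uses $D\cup\{a_1,\ldots,a_n\}$ (the paper calls this $D\cup A$, with $A$ the set of degree-$3$ vertices of the triangles), and the reverse direction counts one mandatory triangle vertex per $i$ and then replaces each $x_i\in D'$ by $v_i$. Your write-up of the reverse direction is in fact more careful than the paper's, which does not explicitly discard the triangle vertices when passing to $D''$.
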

\begin{proof}
	Let $ A $ contain the degree $ 3 $ vertex from each copy of $ K_3.$ Suppose $ D $ is a dominating set of $ G $ of size at most $ k,$ then it is clear that $ D \cup A $ is a dominating set of $ G^\prime $ of size at most $ k+n.$ \par
	Conversely, suppose $ D^\prime $ is a dominating set of $ G^\prime$ of size $ k+n.$ Then at least one vertex from each $ k_3 $ must be included in $ D^\prime.$ 
Let $ D^{\prime\prime}$ be the set formed by replacing all $ x_i$'s in $ D^\prime $ with corresponding $ v_i $'s. Clearly, $ D^{\prime\prime} $ is a dominating set of size at most $k$ in $ G$.
\end{proof}
\noindent The following result is well known for the DOMINATION problem.
\begin{theorem}(\cite{garey})
	The DOMINATION problem is NP-complete for general graphs.
\end{theorem}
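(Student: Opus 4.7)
The plan is to establish this classical result by the Garey--Johnson style reduction from 3-SAT. Membership in NP is immediate: a candidate set $S \subseteq V$ is a polynomial-size witness, and one can verify $|S| \le k$ and $N[S] = V$ in polynomial time.

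For NP-hardness I would reduce from 3-SAT. Given a formula $\phi$ with variables $x_1, \ldots, x_n$ and clauses $C_1, \ldots, C_m$, construct a graph $G_\phi$ as follows. For each variable $x_i$, introduce a triangle on three vertices $\{t_i, f_i, a_i\}$, where $t_i$ and $f_i$ represent the positive and negative literal of $x_i$ and $a_i$ is an auxiliary vertex whose only neighbors are $t_i$ and $f_i$. For each clause $C_j$, introduce a vertex $c_j$ and add the edge $(c_j, t_i)$ if the literal $x_i$ occurs in $C_j$, and $(c_j, f_i)$ if $\neg x_i$ occurs in $C_j$. Set the target $k = n$. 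The construction is polynomial in $n+m$.

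The equivalence would be proved in two directions. If $\phi$ has a satisfying assignment $\tau$, pick $t_i$ whenever $\tau(x_i)=\mathrm{true}$ and $f_i$ otherwise; this set has size $n$, dominates each triangle (the chosen literal-vertex covers the other two), and dominates every clause vertex $c_j$ because some literal in $C_j$ is satisfied. Conversely, any dominating set $D$ must contain at least one vertex of each triangle $\{t_i,f_i,a_i\}$ simply to dominate $a_i$, whose neighborhood is $\{t_i,f_i\}$. Thus $|D|\ge n$, and $|D|=n$ forces exactly one vertex per triangle; any chosen $a_i$ can be swapped for $t_i$ or $f_i$ without losing domination, producing a set of literal-vertices that encodes a truth assignment, and the fact that each $c_j$ is dominated means each clause contains a true literal.

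The only real subtlety I expect is the role of the auxiliary vertex $a_i$: without it, clause vertices could cheaply dominate both $t_i$ and $f_i$ from outside the triangle and no per-variable budget would be forced, so the bijection with truth assignments would collapse. Inserting $a_i$ with $N(a_i)=\{t_i,f_i\}$ is exactly what compels one unit of the domination budget to be spent inside each variable gadget, matching $k=n$ tightly and making the reduction work. Everything else is bookkeeping in the bipartite adjacency between literal-vertices and clause-vertices.
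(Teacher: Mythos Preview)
Your argument is correct: the triangle gadget with the auxiliary vertex $a_i$ forces at least one budget unit per variable, the swap from $a_i$ to a literal vertex only enlarges the closed neighborhood of the chosen set, and the domination of each clause vertex by a literal vertex in the (post-swap) set yields a satisfying assignment. This is the standard Garey--Johnson reduction.

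Note, however, that the paper does not actually give a proof of this theorem at all: it merely states the result with a citation to \cite{garey} and then immediately uses it, together with Lemma~\ref{difflemma}, to conclude NP-completeness of DOMINATION on GC graphs. So there is no ``paper's own proof'' to compare against here; your proposal supplies the omitted classical argument, and it does so correctly.
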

\begin{theorem}
	The DOMINATION problem is NP-complete for GC graphs.
\end{theorem}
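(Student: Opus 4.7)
The plan is to give a polynomial-time reduction from the DOMINATION problem on general graphs, which the preceding theorem (Garey) tells us is NP-complete, to the DOMINATION problem on GC graphs. Since Lemma \ref{difflemma} already packages the crucial combinatorial equivalence between dominating sets of $G$ and dominating sets of its GC graph $G'$, the argument reduces to assembling the standard NP-completeness template around that lemma.

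First I would argue membership in NP: given a candidate set $D \subseteq V(G')$ with $|D| \le k'$, checking that every vertex of $G'$ lies in $N[D]$ is clearly polynomial in $|V(G')|$, so DOMINATION on GC graphs is in NP. Next, given an arbitrary instance $(G,k)$ of DOMINATION where $G=(V,E)$ with $|V|=n$, I would construct the associated GC graph $G'=(V',E')$ exactly as in the definition, adding $n$ triangles $\{a_i,b_i,c_i\}$, $n$ new vertices $x_1,\ldots,x_n$, and the edges $(x_i,v_i)$ and $(x_i,a_i)$ for $1\le i\le n$. Since $|V'|=4n$ and $|E'|=|E|+5n$, this construction is carried out in polynomial time. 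I would then set $k' = k + n$ and output the instance $(G',k')$ of DOMINATION on GC graphs.

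For correctness, I would simply invoke Lemma \ref{difflemma}, which asserts that $G$ admits a dominating set of size at most $k$ if and only if $G'$ admits a dominating set of size at most $k+n = k'$. Combined with the fact that $G'$ is a GC graph by construction and that the reduction is polynomial, this establishes the NP-hardness of DOMINATION on GC graphs; together with membership in NP, it completes the NP-completeness proof.

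There is no real obstacle here: all the combinatorial work is already absorbed into Lemma \ref{difflemma}, and the classical NP-completeness of DOMINATION on general graphs supplies the source of hardness. The only things to be careful about are (i) verifying that the construction of $G'$ is genuinely polynomial in $|V|+|E|$, which is immediate from the explicit description above, and (ii) making sure the additive shift $k'=k+n$ is recorded so that the equivalence in Lemma \ref{difflemma} applies verbatim.
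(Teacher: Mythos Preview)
Your proposal is correct and follows exactly the paper's approach: the paper's own proof is the single line ``The proof directly follows from above theorem and lemma~\ref{difflemma},'' and you have simply spelled out the standard NP-completeness template around that lemma. (One harmless slip: $|V'|=5n$, not $4n$, since each of the $n$ triangles contributes three new vertices in addition to the $x_i$'s; this does not affect the polynomial-time bound or the argument.)
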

\begin{proof}
	The proof directly follows from above theorem and lemma \ref{difflemma}.
\end{proof}
\noindent It is identified that the two problems, DOMINATION and SCDM are not equivalent in aspects of computational complexity. For example, when the input graph is either doubly chordal or a GC graph then complexities differ. Thus, there is a scope to study each of these problems on its own for particular graph classes. \vspace{0.2cm}
\begin{center}
	\textbf{Acknowledgement}
\end{center}
The authors are grateful to the anonymous reviewers for their valuable comments and suggestions, which result in the present version of the paper.

\end{document}